\numberwithin{equation}{section}
\newtheorem{theorem}{Theorem}[section]
\newtheorem{lemma}[theorem]{Lemma}
\theoremstyle{definition}
\newtheorem{definition}{Definition}[section]
\renewcommand{\Pr}{\operatorname{\mathrm{P}}}
\newcommand{\sender}{\textsf{Sender}}
\newcommand{\receiver}{\textsf{Receiver}}
\def\blfootnote{\xdef\@thefnmark{}\@footnotetext}
\newacronym{2PC}{2PC}{Two-Party Computation}
\newacronym{BC}{BC}{Bit Commitment}
\newacronym{BQSM}{BQSM}{Bounded-Quantum-Storage Model}
\newacronym{CPTP}{CPTP}{Completely Positive Trace Preserving}
\newacronym{DI}{DI}{Device-Independent}
\newacronym{EA-PM}{EA-PM}{Entanglement-Assisted Prepare-and-Measure}
\newacronym{EPR}{EPR}{Einstein-Podolsky-Rosen}
\newacronym{MPC}{MPC}{Multi-Party Computation}
\newacronym{NISQ}{NISQ}{Noisy Intermediate Scale Quantum}
\newacronym{NPA}{NPA}{Navascués-Pironio-Acín}
\newacronym{NQSM}{NQSM}{Noisy-Quantum-Storage Model}
\newacronym{OT}{OT}{Oblivious Transfer}
\newacronym{OTP}{OTP}{One-Time Program}
\newacronym{OTM}{OTM}{One-Time Memory}
\newacronym{OWF}{OWF}{One-Way Function}
\newacronym{PKC}{PKC}{Public-Key Cryptography}
\newacronym{PM}{PM}{Prepare-and-Measure}
\newacronym{POVM}{POVM}{Positive Operator-Valued Measure}
\newacronym{QROM}{QROM}{Quantum Random Oracle Model}
\newacronym{RAC}{RAC}{Random Access Code}
\newacronym{SDC}{SDC}{Superdense Coding}
\newacronym{SF}{SF}{Sequential Function}
\newacronym{TLP}{TLP}{Time-Lock Puzzle}
\title{Non-Interactive Oblivious Transfer and One-Time Programs\\ from Noisy Quantum Storage}
\author[ \hspace{-1ex}]{Ricardo Faleiro\thanks{\url{ricardofaleiro@tecnico.ulisboa.pt}}}
\author[ \hspace{-1ex}]{Manuel Goulão\thanks{\url{manuel.goulao@inesc-id.pt}}}
\author[ \hspace{-1ex}]{Leonardo Novo\thanks{\url{leonardo.novo@inl.int}}}
\author[ \hspace{-1ex}]{Emmanuel Zambrini Cruzeiro\thanks{\url{emmanuel.cruzeiro@lx.it.pt}}}
\affil[$*$]{\footnotesize Instituto de Telecomunicações, University of Aveiro, Portugal}
\affil[$\dag$]{INESC-ID, Instituto Superior Técnico, Universidade de Lisboa, Portugal;\newline Okinawa Institute of Science and Technology Graduate University, Okinawa, Japan}
\affil[$\ddag$]{International Iberian Nanotechnology Laboratory (INL), Portugal}
\affil[$\S$]{Instituto de Telecomunicações, Lisbon, Portugal;\newline Departamento de Engenharia Electrotécnica e de Computadores, Instituto Superior Técnico, Portugal}
\date{}
\begin{document}

\maketitle
\blfootnote{The authors are listed in alphabetical order.}

\begin{abstract}\noindent
Few primitives are as intertwined with the foundations of cryptography as Oblivious Transfer (OT). Not surprisingly, with the advent of quantum information processing, a major research path has emerged, aiming to minimize the requirements necessary to achieve OT by leveraging quantum resources, while also exploring the implications for secure computation. Indeed, OT has been the target of renewed focus regarding its newfound quantum possibilities (and impossibilities), both towards its computation and communication complexity. For instance, non-interactive OT, known to be impossible classically, has been strongly pursued. 
In its most extreme form, non-interactive chosen-input OT (one-shot OT) is equivalent to a One-Time Memory (OTM). OTMs have been proposed as tamper-proof hardware solutions for constructing One-Time Programs --- single-use programs that execute on an arbitrary input without revealing anything about their internal workings.
In this work, we leverage quantum resources in the Noisy-Quantum-Storage Model to achieve:

\begin{enumerate}[leftmargin=*]
    \item \textit{Unconditionally-secure two-message non-interactive OT} --- the smallest number of messages known to date for unconditionally-secure chosen-input OT.
    \item \textit{Computationally-secure one-shot OT/OTM}, with everlasting security, assuming only one-way functions and sequential functions --- without requiring trusted hardware, QROM, or pre-shared entanglement.
    \item \textit{One-Time Programs} without the need for hardware-based solutions or QROM, by compiling our OTM construction with the~\cite{C:GKR08,TCC:GIS+10} compiler.
\end{enumerate}
\end{abstract}
\newpage

\section{Introduction}
    The field of quantum cryptography had its genesis with the concept of ``Conjugate Coding''~\cite{ACM:Wiesner83}.
    The same primitive would later be published as \gls{OT}~\cite{Rabin81}, and would expand to become one of the most relevant primitives in cryptography.
    \gls{OT} has different but equivalent formulations~\cite{C:C87}, with the most prominent one being 1-out-of-2 \gls{OT}~\cite{CACM:EGL85}.
    1-out-of-2 \gls{OT} is a simple protocol between two parties, the \sender{} and the \receiver{}, where the \sender{} has two input messages (\(x_0,x_1\)) and the \receiver{} has an input choice-bit \(y\) and outputs the message \(x_y\).
    This happens while the \sender{} remains oblivious to \(y\) and the \receiver{} remains oblivious to \(x_{1-y}\).
    In this work, the \sender{} messages are considered to be bits.
    Notably, a series of works established the impossibility of constructing unconditionally-secure \gls{OT} and \gls{BC} without any assumption~\cite{PRL:LC97,PRL:Mayers97,PRA:Lo97}.
    This ignited a research line focused on finding the minimal requirements to implement these primitives.

    \medskip
    
    Given the impossibility to construct unconditionally-secure \gls{OT}, some restriction must be introduced to its execution environment.
    Often, limitations to the computing power (e.g., computational hardness assumptions), or a restricted physical model (e.g., bounded/noisy memory, shared randomness) are introduced in the system to enable the desired functionality.
    Moreover, relevant results show that \gls{OT} may be built from quantum computation and communication and (quantum-secure) \glspl{OWF}~\cite{EC:GLSV21,C:BCKM21}, or even weaker EFI pairs~\cite{ITCS:BCQ23}, thus relaxing the classical-world requirements of \gls{PKC}~\cite{STOC:IR89}.
    This opens up a series of new possibilities for potential \gls{OT} constructions, in particular, constructions that achieve otherwise unattainable security or efficiency levels.
    Indeed, low communication complexity is a highly desirable property in secure computation, and  following this research line, this work proposes to answer the question:
    \begin{center}\em
        What is the minimal number of communication rounds required\\ to construct 1-out-of-2 \acrlong{OT}?
    \end{center}
    
    Without further analysis, given that no restrictions are known on the minimal number of messages, achieving one-shot \gls{OT} would be the best one could aim for.
    However, classically, two-message \gls{OT} (one message each way) is optimal, as the messages of the \sender{} must somehow depend on the choice of the \receiver{}, or otherwise it could recover both messages, i.e., the protocol must be interactive.
    The pursuit of this two-message optimality has led to an extended research road (e.g., ~\cite{SODA:NP01,C:PVW08,EC:DGH+20}).
    Additionally, prior work has also explored delegating the \gls{OT} functionality to a trusted device, known as the \gls{OTM}~\cite{C:GKR08}, which can be transmitted between the parties, thus enabling non-interactive \gls{OT}, assuming the device remains secure.
    Therefore, it is pertinent to study what happens when quantum computation and communication and quantum-secure computational assumptions are introduced.
    
    We remark that, for the purposes of this work, a message means a single package of information sent from one party to the other.
    Thus, one-shot means that just one message is sent from one party to the other, as a single event.
    On the other hand, non-interactivity is used to state that communication is unidirectional, with one party sending possibly multiple messages to the other, which does not reply.%
        \footnote{Note that \textit{non-interactivity} is used with various different meanings in the literature, such as parties exchanging messages but the messages not depending on each other, e.g., non-interactive key-exchange.}

    \medskip

    Remarkably, such a simple primitive as \gls{OT}, by itself, is complete for general secure computation (\gls{2PC} and \gls{MPC})~\cite{FOCS:Yao86,STOC:GMW87,STOC:Kilian88}. 
    Consequently, a further line of investigation that analyzes how to relate the complexity of \gls{OT} with the complexity of \gls{MPC} has been pursued (e.g.,~\cite{EC:BL18,EC:GS18}).
    Since these only account for \gls{OT} built from classical resources, such works only aim for optimality as two messages of interactive communication (albeit sometimes with limited interaction, e.g.,~\cite{EC:IKO+11}).
    In~\cite{C:GKR08}, the concept of \gls{OTP} is introduced as program that can be run on an arbitrary single input, but only once, without revealing anything about itself (besides the output).
    The possibility to copy the program and running it again makes this primitive impossible to achieve, and to circumvent this issue,~\cite{C:GKR08} considers a secure memory device called \gls{OTM}.
    Interestingly, an \gls{OTM} is simply a rename of \gls{OT} with the added restriction of being a non-interactive, 1-message primitive (i.e., one-shot \gls{OT}), envisioned not as a cryptographic protocol (like \gls{OT}), but as a physical tamper-proof hardware device.
    (This distinction has since been blurred in the literature.)
    Then, as \gls{OT} yields \gls{2PC}, so does \gls{OTM} yield \gls{OTP}, with a compiler being able to achieve non-interactive malicious-secure \gls{OTP} from \gls{OTM}.
    The concept of \gls{OTP} was further studied and generalized in~\cite{TCC:GIS+10}, where it is shown that by allowing the parties to exchange tamper-proof hardware it is possible to directly build non-interactive secure \gls{2PC}.
    This leads to the question:
    \begin{center}\em
        How to design \acrlongpl{OTP} under weaker requirements and without trusted hardware?
    \end{center}

\subsection{Contributions}
    Three major conceptual contributions are established in this work, related to the proposed questions above.
    These are in the form of two 1-out-of-2 \gls{OT} protocols secure against malicious adversaries, and, as a corollary, a construction for \glspl{OTP}.
    As far as we are aware, these are the first evidences of such protocols in the scientific literature.
    \medskip
    
    The first conceptual contribution answers that
    \begin{center}\em
        Two-message non-interactive 1-out-of-2 \acrlong{OT} is possible\\ in the \acrlong{NQSM}, unconditionally.
    \end{center}
    This solution exploits the \gls{NQSM}, a model where the quantum memory of the parties performing the protocol, in particular the adversarial parties, is imperfect, and subject to noise.
    Thus, it prevents the indefinite (time) storage of quantum states, while no restrictions are made to the computing power or classical memory of the parties.
    Meanwhile, to execute the protocol, the honest parties require no quantum memory whatsoever.
    This is usually considered a general and weak assumption, as it is a realistic model that replicates the physical limitations of the present and near-future technology.
    Another construction is also provided by replacing the \gls{NQSM} by the stronger assumption of the \gls{BQSM} to substantially improve efficiency and remove artificially introduced time-delays.
    \medskip
    
    The second conceptual contribution evidences a
    \begin{center}\em
        One-shot 1-out-of-2 \acrlong{OT} (or, \acrlong{OTM}) in the \acrlong{NQSM} assuming the existence of  a \acrlong{OWF} and a \acrlong{SF}.
    \end{center}
    This construction again relies on the \gls{NQSM}, but also depends on the existence of a quantum-secure \gls{OWF} and the existence of a \gls{SF}, as the construction relies on the primitive of \gls{TLP}.
    The existence of \glspl{SF} (also called non-parallelizing languages~\cite{ITCS:BGJ+27}), and their relation to the construction of \glspl{TLP} have been previously studied~\cite{ITCS:BGJ+27,IC:JMRR21}, while candidates for \glspl{SF} ranging from hash functions (\gls{QROM})~\cite{EC:CFHL21} to lattice-based assumptions~\cite{C:LM23,C:AMZ24} have been recently proposed in the literature.
    Again, no restrictions are made to the classical memory of the parties, and no quantum memory is required to honestly complete the protocol.
    But now, it must be assumed that the parties are probabilistic-polynomial-time quantum machines and have limited computing power.
    Another technical contribution from this construction is the introduction of the use of the \gls{TLP} primitive when proving security in the \gls{NQSM}, such that the time it takes to solve the \gls{TLP} enforces quantum decoherence of the memories of an adversary.
    
    \medskip

The third conceptual contribution exhibits a
    \begin{center}\em
        \acrlong{OTP}, secure against malicious adversaries, assuming the existence of \\ a one-shot \acrlong{OT}/\acrlong{OTM} in the \acrlong{NQSM}.
    \end{center}
Given that \gls{OTM} is impossible in the plain model, even under computational assumptions~\cite{C:BGS13}, we construct \glspl{OTP} in the \gls{NQSM}, achieving a solution in a weak trusted model --- arguably the best one can hope for.
We leverage our previous contribution, the one-shot \gls{OT}/\gls{OTM}, combined with the compiler from~\cite{C:GKR08} which transforms a malicious-secure \gls{OTM} into a malicious-secure \gls{OTP} for arbitrary functions.
For this corollary, we need the same assumptions as before of a quantum-secure \gls{OWF} and \gls{SF}, and work in the \gls{NQSM}.
\glspl{OTP} are an extremely powerful concept that is impossible to realize in most settings, with previous proposals relying on trusted hardware assumptions, which have often proven unreliable in practice. Here, we construct \glspl{OTP} under the weakest assumptions to date. \glspl{OTP} have applications such as software protection and one-time proofs~\cite{C:GKR08}.

\medskip 

Finally, in terms of technical novelty, a key contribution is the explicit handling of post-measurement classical information in non-interactive \gls{OT}, \gls{OTM}, and consequently \gls{OTP}. This approach enables security guarantees against the most general class of attacks conceivable by a malicious quantum adversary.
Although this framework~\cite{PRA:GW10} has been proposed for some time, to our knowledge, this is the first instance of its application to \gls{OT}, \gls{OTM}, and \gls{OTP}. The techniques employed are non-trivial and may have broader applications in quantum cryptography.
Our approach relies on deriving tight upper bounds for the eigenvalues of qubit register states, which, interestingly, are connected to specific Hamiltonians of spin-chain systems, widely studied in condensed-matter physics for entirely different reasons.
Additionally, a notable feature of our protocol is the potential use of entanglement to employ self-testing techniques, introducing an extra layer of (semi-)device-independent security.

\subsection{Related Work}\label{sec:relatedwork}

\paragraph{OT in Restricted Settings:}
    The impossibility of unconditional \gls{OT} from exclusively informational theoretical considerations demands extra assumptions, either physical or computational, beyond the validity of quantum mechanics~\cite{PRA:Lo97}.
    The first proposals for quantum-based \gls{OT} protocols were constructed in the same setup of the original Conjugate Coding~\cite{ACM:Wiesner83}, and were secure only in certain restricted settings~\cite{CK88,C:BBCS92}.
    In fact, the authors of~\cite{C:BBCS92} even described possible measurement attacks compromising \sender{}-security, wherein the \receiver{} would delay the measurements and implement multi-qubit measurements later on.
    Thus, in order to establish security, one should still require that the \receiver{} implements the measurements at the desired time, by any means necessary, e.g., computational or physical limitations.
    One alternative they propose is assuming the existence of commitment schemes secure against limited computing power, say using \glspl{OWF}~\cite{C:BBCS92}.
    In fact, a recent line of work has confirmed the belief that the weaker assumption of \glspl{OWF} suffices for secure \gls{OT} in the quantum world~\cite{EC:GLSV21,C:BCKM21}, as opposed to  the classical setting  where \gls{PKC} is known to be a requirement~\cite{STOC:IR89}.
    As an alternate approach, one may consider physically motivated restrictions, like bounding the memory of the adversaries, the \gls{BQSM}.
    This type of restriction had already been invoked in the classical setting~\cite{JC:Maurer92,C:CM97}, with explicit \gls{OT} constructions presented therein~\cite{FOCS:CCM98}, before being considered in the quantum setting \cite{FOCS:DFSS05} (only bounding the total quantum storage), and further generalized to a more realistic scenario \cite{PRL:WST08} (unbounded quantum storage, but noisy).
    Precisely, in \cite{PRA:S10} the \gls{NQSM} was explicitly leveraged  in order to prove the security of \cite{C:BBCS92}.
    Recently, constructions leveraging physical restricted models (\gls{BQSM}) together with computational assumptions (Learning-With-Errors) have been proposed, opening up a wide range of new applications and enabling device-independent \gls{OT}~\cite{NJP:BY23}.

\paragraph{Non-Interactive OT and MPC:}
    The \gls{OT} protocols proposed in this work are non-interactive, in the sense that communication is always one-way, from \sender{} to \receiver{}.
    For these kinds of \gls{OT} protocols, perfect \receiver{}-security can be immediately established from reasonable physical principles, like the \textit{no-signalling-from-the-future}~\cite{PRA:CDP09}. In fact, Wiesner's original proposal of Conjugate Coding~\cite{ACM:Wiesner83}, even if not proven to be secure for the \sender{}, was non-interactive, and thus perfectly \receiver{}-secure. 
    It follows naturally that physically constrained models precluding unbounded quantum storage, such as the \gls{BQSM} and \gls{NQSM}, would be prime candidates for constructing such non-interactive unconditionally-secure \gls{OT} protocols. Indeed, in~\cite{FOCS:DFSS05}, where the \gls{BQSM} was first introduced, a construction for non-interactive All-or-Nothing \gls{OT} based on the original Conjugate Coding setup was introduced. This was followed by a non-interactive two-message 1-out-of-2 Random \gls{OT}~\cite{C:DFR+07}, also in the \gls{BQSM}, which was further generalized to the \gls{NQSM}~\cite{PRL:WST08}.%
        \footnote{Translating Random \gls{OT} to chosen-input \gls{OT} requires one extra message~\cite{C:DFSS06}.}
    Lastly, different attempts to achieve secure computation non-interactively have been developed, e.g.:
    the subject of ``Non-Interactive Secure Computation''~\cite{EC:IKO+11,TCC:BL20}, a \gls{2PC} scenario that can be computed in two steps, but one step is delegated to a pre-processing publishing phase (which in practice makes it interactive); or similarly, the ``Private Simultaneous Messages'' protocols~\cite{STOC:FKN94,C:BGI+14,AC:HIJ+17,TCC:HIKR18}, where some parties communicate to a different entity a message that depends on their input, but requires that they share some randomness source (again delegating interaction to a pre-processing phase).

\paragraph{Round-Optimal OT and MPC:}
    Generally, the usefulness of round-optimal \gls{OT} is drawn from trying to improve the communication round complexity of \gls{MPC}.
    As its most costly primitive, and often reliant on \gls{PKC} primitives, minimizing the round complexity of \gls{OT} is paramount.
    In spite of this fact, while round-optimal \gls{OT} has been a pursued goal for a long time, research on the topic has mostly been restricted to classical solutions for \gls{OT}.
    Therefore, round-optimality of \gls{OT} is largely and explicitly been considered to be two messages, and necessarily interactive~\cite{SODA:NP01,EC:K05,C:PVW08,EC:DGH+20,AC:CSW20}.
    Furthermore, the round complexity of secure computation protocols has also been extensively studied in the literature.
    In particular, analyzing black-box constructions of \gls{MPC} in the plain model is known to require at least four messages, and interactivity ~\cite{C:KO04,EC:GMPP16,C:ACJ17,TCC:BHP17,C:HHPV18,TCC:RCG+20}.
    However, relaxations to the number of corrupted parties~\cite{C:IKP10,C:ACGJ18}, or to the security~\cite{C:KO04,FOCS:QWW18,TCC:ABJ+19,EC:COWZ22} allow for more efficient protocols to be achieved that only take two messages.
    Also, assuming shared randomness, it is possible to compile \(n\)-message \gls{OT} into \(n\)-message \gls{MPC}, for \(n\geq 2\)~\cite{EC:BL18,EC:GS18}.
    
    \paragraph{Hardware and software OTM and OTP:}
   
    Considering \gls{OT} and secure computation, together with non-interactivity, the notions of \gls{OTM} and \gls{OTP} naturally arouse~\cite{C:GKR08}.
    The pursuit of these constructions has continued since, originally focusing on secure hardware but later seeking to eliminate this strong requirement.
    In~\cite{ISC:DDKZ14}, it was shown how to implement \gls{OTP} without assuming the security of hardware devices, by relying on two trusted models that assume bounded leakage and bounded storage.
    In~\cite{TCC:EGG+22}, another hardware functionality (albeit more practical and available, named the counter lockbox) is used to devise \glspl{OTM} and \glspl{OTP}.
    A clear direction to approach the issue of copying the program and running it again is to explore the no-cloning property of quantum information.
    However, in~\cite{C:BGS13}, it is shown that \glspl{OTP} for all programs cannot exist in the plain model (even under computational assumptions), requiring some trusted setup.
    Still,~\cite{C:BGS13} generalized the concept of \glspl{OTP} to quantum programs.
    In~\cite{Q:BGZ21},  secure hardware (but stateless) is once again used to achieve quantum \glspl{OTP} assuming a bounded number of adversarial queries.
    
    \medskip

    Recently, in~\cite{ITCS:Liu23} an independent proposal for a \gls{OTM} was introduced.
    The crucial distinction with our approach lies in the way one limits said time-frame. Our \gls{OTM} leverages the \gls{NQSM} to limit the adversary's ability to maintain coherent quantum states over time due to noise.
    In \cite{ITCS:Liu23}, instead, an artificial bound on circuit depthness for a pre-fixed polynomial depth is considered (motivated by NISQ computers).
    On the other hand, our setup in the \gls{NQSM} is based on the physical difficulty of obtaining high-fidelity coherent quantum memories, a weak and appealing security model often considered nowadays for \gls{OT} and secure computation, e.g.,~\cite{PRR:LPAK23}.
    Moreover, in~\cite{ITCS:Liu23} the security proof requires the \gls{QROM}, hinging on idealized oracle behavior and leading to the usual pitfalls of heuristic oracle modeling.
    In terms of efficiency, both protocols scale polynomially in the computational security parameter, i.e., as the adversarial power grows (polynomially), the communication also grows (polynomially). 
    Another critical difference, regarding the adversarial model,~\cite{ITCS:Liu23} imposes the strong assumption that adversaries only may do full quantum measurements over the entire qubit register (meaning they do not need to consider post-measurement information).
    We consider arbitrary measurements and with classical post-measurement information, which constitutes the broadest class of attacks possible, coherent attacks, safeguarding even against cleverly chosen arbitrary partial measurements.
     
    \medskip
    
    Also recently, in~\cite{EC:ABKK23,C:BKS23}, the authors have tackled similar questions relating to non-interactive \gls{OT} (without tackling \gls{OTM} and \gls{OTP}).
    In~\cite{EC:ABKK23}, three constructions for \gls{OT} are presented.
    While the first solution claims to be one-shot, it assumes shared maximally-entangled pairs before the execution of the protocol, in a setup phase that is not accounted for as a round.
    This means that this protocol needs, effectively, two messages.
    This fact is explicitly acknowledged by the authors in their Section~2.2~\cite{EC:ABKK23}, where it is mentioned that to construct a protocol without assuming the setup phase, one more message must be introduced in an interactive manner (i.e., in the other direction).
    Also, their construction is for Random \gls{OT} in the \gls{QROM}, in opposition to this work, where a chosen-input \gls{OT} in the \gls{NQSM} is proposed.
    In~\cite{C:BKS23}, another construction for a one-shot \gls{OT} in the shared EPR pairs model is proposed, under the sub-exponential LWE assumption, together with a proposal for an \gls{MPC} protocol also requiring the QROM.
    In contrast, the contributions of our work rely on the \gls{NQSM} to show that not only an unconditionally-secure two-message non-interactive chosen-input \gls{OT} exists, but even a one-shot \gls{OT} from \glspl{OWF} and \glspl{SF} and without the need to have pre-shared entanglement; furthermore, \glspl{OTP} is built from these constructions.

\subsection{Overview}
    Here, an overview of the main results is provided.
    The objective is to give intuition about the contributions of this work in a simple manner. 
    As such, most of the arguments reasoning is built-up from well-known principles of quantum information and adapting them to the desired setting.

\subsubsection{Non-Interactive Oblivious Transfer and One-Time Memory}
    Constructing non-interactive \gls{OT} and \gls{OTM} has been an elusive task, known to be impossible classically, without any further computational assumptions and trusted models.
    Moreover, not only unconditionally-secure \gls{OT}, but even \gls{OT} from \glspl{OWF} were widely held to be impossible.
    From recent results, it is now known that \gls{OT} and \gls{MPC} are possible from quantum computation and information and \glspl{OWF}, without the need for \gls{PKC}.
    Also, unconditionally-secure \gls{OT} can be enabled by restricting the physical setting of its execution, specially interesting for realistic physical models.
    For the particular case of \gls{OTM} (i.e., one-shot \gls{OT}), its realization is impossible in the plain model even under computational assumptions~\cite{C:BGS13}.
    
    Two relevant \gls{OT} constructions are provided, based on the realistic modeling of imperfect quantum memories, the \gls{NQSM}:
    \begin{itemize}
        \item Non-interactive two-message unconditionally-secure (chosen-input) 1-out-of-2 \gls{OT}, secure against malicious adversaries.
        \item \gls{OTM}, i.e., one-shot (chosen-input) 1-out-of-2 \gls{OT}, assuming the existence of a \gls{OWF} and a \gls{SF}, secure against malicious adversaries.
    \end{itemize}
    
    The first proposed \glspl{OT} attains unconditional security, and is conceptualized in the \gls{NQSM} so as to avoid the usual impossibility results.
    The \gls{NQSM} is a highly appealing model, as physical quantum memories are imperfect and suffer from quantum decoherence relatively fast, and is specially relevant as the protocol does not require any memory to honestly run.
    The protocol works as follows:
    \begin{enumerate}
        \item The \sender{} prepares two maximally-entangled qubits in which it encodes its inputs.
        \item The \sender{} hides these two qubits in a large set of uniformly random qubits, such that the \receiver{} cannot tell which qubits encode the information.
        \item The \receiver{} measures each qubit, in a basis defined by its input-bit, and stores the measurement results.
        \item After waiting some pre-defined time, the \sender{} communicates the encoding, which allows the \receiver{} to compute the desired \gls{OT} output.
    \end{enumerate}
    To ensure security, the \gls{NQSM} establishes that, after some time, the quantum memory of the parties becomes irretrievable.
    So, this model is leveraged by making an adversary trying to break the protocol wait a predefined amount of time, such that it cannot make joint measurements on only the information qubits unless it guesses them correctly (as from separate measurements cheating is impossible).
    This can be made to be unfeasible by appropriately choosing the amount of hiding qubits that the \sender{} sends to the \receiver{}.
    Here, the distance between the statistical distributions become closer at a linear rate, and can be made sufficiently close by appropriately setting the statistical security parameter.
    Also, state preparation is very efficient and current technology already enables high-rate sending of qubits.
    Moreover, from the non-interactivity of the protocol, the \sender{} cannot do anything, as it is unable to extract information from future events.
    Evidently, in this construction, no quantum memory whatsoever is required for the honest parties to engage in the protocol.
    
    A variation of the first protocol is also proposed, where time efficiency is increased in exchange for replacing the weaker \gls{NQSM} with the stronger assumption of the \gls{BQSM}.
    Here, the \gls{BQSM} is exploited, as it allows for an instant to be chosen when the adversary can only store a subset of its total quantum memory.
    If this instant is chosen to be exactly between the \sender{} sending the qubits and sending the encoding, then, no waiting time is required to achieve security, given that a large enough number of qubits are sent to mask the legitimate ones.

    \medskip
    
    The second proposed \gls{OT} achieves the captivating goal of being one-shot, i.e., constitutes a \gls{OTM}.
    Here, for the first time, the \gls{NQSM} is connected with the concept of \gls{TLP}.
    Conveniently, a \gls{TLP} is a primitive that allows for a party to send a hidden message to another, such that the recipient must spend some time (via computation) to recover the concealed information.
    So, from the \gls{NQSM}, by requiring that an adversary must spend some physical time to gain information that would enable an attack, its memory storage suffers from the phenomenon of quantum decoherence, and the attack becomes unfeasible.
    In this particular construction, the same rationale from the previous one is used, where the information qubits are hidden among random qubits, such that an adversary cannot perform joint measurements.
    But here, the encoding is hidden inside the \gls{TLP} and sent together with the full state, and the parameters of the \gls{TLP}, i.e., the time it takes to solve it is chosen such that quantum decoherence would happen in the meantime.
    Thus, a malicious \receiver{} cannot store the qubits until it knows the encoding of which two to measure jointly and break security.
    This is essentially the same situation as in the previous two-message construction, but delegates the time-keeping from the \(\sender\) to a computational cryptographic primitive to achieve the \gls{OTM}.
    Moreover, using a \gls{TLP} does not give any advantage to a malicious \sender{} to learn the input of the \receiver{}.
    Hence, this \gls{OT} protocol immediately achieves everlasting security, meaning that the non-chosen message becomes perfectly irretrievable after the execution of the protocol, since the computational assumption only needs to hold while the \gls{TLP} is being solved.
    Clearly, also in this construction, no quantum memory whatsoever is required for the honest parties to engage in the protocol.

\paragraph{Remark.}
Our \gls{OT} proposal is secure both regarding indistinguishability of statistical distributions (via the hiding of the information in the qubits --- quantum communication), and the computational unfeasibility of a polynomially-bounded quantum adversary to prematurely extract a secret encoding (via the \gls{TLP} building block --- classical communication).
First, the computational security of the protocol must account for a growing computational power of any adversary (represented by a computational security parameter), meaning that the communicated \gls{TLP} size grows polynomially to the computational security parameter.
In opposition, the number of qubits that must be sent to ensure security does not grow with the power of the adversary (it is constant), and must be set to the desirable level of indistinguishability of the distributions according to the statistical security parameter.
We instantiate this statistical distance to \(2^{-40}\), meaning that the constant number of qubits sent must be \(2^{40}\) independently of the power of the adversary.
Therefore, in our first proposal (non-interactive \gls{OT}), the classical and quantum communication is constant in relation to the adversarial power (statistical security);
and, in our second proposal (\gls{OTM}), the quantum communication is constant and the classical communication grows polynomially with the computational security parameter, as does the size of our proposal for the \gls{OTP} construction.

\subsubsection{One-Time Program}
Influenced by the seminal work of Yao's garbled circuits~\cite{FOCS:Yao86}, \cite{C:GKR08} introduced the concept of \gls{OTP}.
An \gls{OTP} implements the functionality of a black-box that can only be evaluated once on an arbitrary input \(x\), and returns the value of a function \(f\) on this input.
Security of the \gls{OTM} requires that no adversary can learn more about \(f\) than what can be learned from a single tuple \((x,f(x))\).

In~\cite{C:GKR08}, a compiler is introduced that constructs a malicious-secure \gls{OTP} for an arbitrary function from a parallel-\gls{OTM}.
We show that our \gls{OTM} may be run in parallel without hindering security, particularly not allowing an adversary to query the \glspl{OTM} adaptively, fulfilling the requirements of a parallel-\gls{OTM}.
Thus, as a direct corollary of the compiler of~\cite{C:GKR08}, we achieve an \gls{OTP} from the constructed \gls{OTM} in the \gls{NQSM}, assuming the existence of a \gls{OWF} and a \gls{SF}.
We stress that our \gls{OTP} construction has the limitation that it must be run directly after being received due to working in the \gls{NQSM}, as qubits holding the information will suffer decoherence, making it impossible to run the \gls{OTP} after a long enough period of time.

\subsection{Open Questions}
    The statistical security component of our proposed constructions require a large number of qubits (exponential in the statistical security parameter) to ensure security.
    This happens as the security of the protocols rely on hiding information in a combinatorial manner with the number of sent qubits.
    While constant for a fixed statistical distance of the distributions and independent of any adversary power (even unbounded), this large number of qubits required hinders efficiency of the protocols, given the current available technology. 
    Hence, removing this requirement would be of relevance, even if replacing it by computational assumptions.
    Following the same approach that was used for this construction, one could try to leverage a relationship between higher-dimensional qudits and string \gls{OT}. 
    
    Moreover, device-independent security extends the standard notion of security, such that even the devices or laboratories used by the parties do not need to be trusted. Although this is a highly appealing security model, demonstrated for OT~\cite{NJP:KW16,NJP:BY23} and other cryptographic primitives~\cite{Nat:P10,PRL:VV14,NJP:AMPS16,PRA:FG21}, it is also extremely demanding, as it relies on the violation of Bell inequalities. To address this challenge, semi-device-independence relaxes the model by allowing certain assumptions to be made, while still preserving the essential properties of the quantum systems that ensure security. The use of entanglement in our construction makes it an attractive candidate for analysis within the device-independence framework. For example, introducing self-testing as a subroutine in some rounds of the protocol could partially verify the resources used, adding a layer of (semi-)device-independence to the security.

\section{Background}\label{sec:background}
\subsection{Quantum Systems, States, and Processes}
    A finite \(d\)-dimensional quantum system is represented by a Hilbert space \(\mathcal{H}\cong \mathbb{C}^d\). Of fundamental importance in quantum information is the \(2\)-dimensional quantum system \(\mathcal{H}\cong \mathbb{C}^2\), the \textit{qubit}.
    Composition of quantum systems is given by the tensor product of individual Hilbert spaces, such that a system of \(n\)-qubits, often called a \(n\)-qubit \textit{register}, is represented by \(\mathcal{H} \cong  \mathbb{C}_{(1)}^{2} \otimes \dots \otimes \mathbb{C}_{(n)}^{2}\cong\mathbb{C}^{2^n}\).
    
    The state-space of a quantum system is given by the set of all trace one, Hermitian, positive semi-definite operators acting on the corresponding Hilbert space, i.e., \(\rho \in \mathcal{L}(\mathcal{H}) \cong \mathbb{C}^{d \otimes d}\). Pure states can be described by outer products of vectors of the Hilbert space \(\rho = \ketbra{\psi}\) and, in that case, it is  customary to represent the state of the system by  the vector itself, \(\ket{\psi} \in \mathcal{H}\cong \mathbb{C}^d\). Pure states in composite systems are said to be \textit{entangled}, if they cannot be factorized into vectors of the product Hilbert spaces.  Also important are the  four different two-qubit (\(\mathbb{C}_{S}^2 \otimes \mathbb{C}_{R}^2 \))  maximally entangled states, known as Bell states,
    \begin{equation}
    \label{Bell_state}
        \ket{B_{xy}}_{SR}= \frac{1}{\sqrt{2}}\left(\ket{0y}+(-1)^x\ket{1\Bar{y}}\right)_{SR},
    \end{equation} 
    for  \(x,y\in \{0,1\}\), and \(\Bar{y}\) being the negation of \(y\). A two-qubit pair in any of the Bell states is said to form an \gls{EPR} pair (or Bell pair).
    
    In quantum-information processing, it is useful to adopt an operational perspective when describing the evolution of quantum systems throughout protocols. From that perspective, one considers different types of idealized black-box processes that can be implemented on quantum systems, changing their states at different stages.
    Fundamentally, three processes are noteworthy:
    
    \begin{itemize}
        \item \textit{Preparation} (classical-to-quantum process): Process with non-trivial classical input \(x\), which outputs a corresponding quantum state \(\rho_x\) obeying the usual normalization \(\Tr(\rho_x)=1\).
        
        \item \textit{Transformation} (quantum-to-quantum process): Process taking as input a state \(\rho_{\textup{in}}\) and outputting \( \rho_{\textup{out}}=\Phi(\rho_\textup{in}) = \sum_k E_k \rho E_k^{\dagger}\), for \(\Phi \in \left\{\mathcal{L}(\mathcal{H}_{\textup{in}}) \to \mathcal{L}(\mathcal{H}_{\textup{out}})\right\}\) a \gls{CPTP} map, and  \(\{E_k\}\) the corresponding Kraus operators satisfying \(\sum_k E_k^{\dagger} E_k = \mathds{1}\). For a unitary transformation \(U\) (\(U^{\dagger}U=UU^{\dagger}=\mathds{1}\)), it simplifies to 
        \(\rho_{\textup{out}}= U\rho_\textup{in} U^{\dagger}\). Transformations can also be considered to have a classical control-input whose value dictates the fixed transformation applied.

        Especially important in this work are the X, Y, Z Pauli unitaries and the Hadamard transform, given in matrix form, respectively, as 

\begin{align}
\label{Pauli}
    X &= \begin{pmatrix} 0 & 1 \\ 1 & 0 \end{pmatrix}, \quad
   Y = \begin{pmatrix} 0 & -i \\ i & 0 \end{pmatrix},\quad
    Z = \begin{pmatrix} 1 & 0 \\ 0 & -1 \end{pmatrix}, \quad   H = \frac{1}{\sqrt{2}}\begin{pmatrix} 1 & 1 \\ -1 & 1 \end{pmatrix}
\end{align}

        \item \textit{Measurement} (quantum-to-classical process): Process with non-trivial input tuple \((y,\rho)\) (classical \(y\) and quantum \(\rho\)), and a classical output \(m\). It is modelled by a  \gls{POVM} \( \{M_{m|y}\}_{m} \), such that, for input \((y,\rho)\), it outputs \(m\) with probability given by the Born rule, \(\Tr(\rho M_{m|y})\).  
    
    \end{itemize}
    Finally, and since transformations can be absorbed either by measurements and/or preparations, the overall probabilities predicted in previous scenarios (often called \gls{PM} scenarios) are given by
    \begin{equation}
    \label{operational stats}
        \Pr[m|x,y]= \Tr(\rho_x M_{m|y}),
    \end{equation}
    where the classical inputs \((x,y)\) unambiguously specify the preparation and measurement for the given protocol setup.

\subsection{Quantum State Discrimination with Post-Measurement Information}
\label{subsec:pi}

    In this section, the formalism of~\cite{PRA:GW10} is introduced, which will be required to analyze the security of the proposed protocols.
    Quantum state discrimination is a specific task in the \gls{PM} scenario. Therein, Bob has no input and tries to decode Alice's classical input with the highest probability by optimally distinguishing between the quantum states which encode her message.
    In~\cite{PRA:GW10}, the state discrimination task is analyzed when classical information related to the preparation is revealed by Alice (who prepares the state according to some information string \(x\) and some encoding \(e\), where the latter is then revealed) to Bob (who measures the state and tries to guess \(x\)).
    But, this reveal is conditioned on the fact that Bob did measure the state and holds no quantum information when receiving this information.
    
    An upper bound is shown to hold when the revealed post-measurement information by Alice (\(e \in \mathcal{E}\) with probability \(p_e\), where \(\mathcal{E}\) is the set of all possible encodings) and the previously measured information by Bob (\(x\) with probability \(p_x\)) form a product distribution (\(p_{x,e} = p_x p_e\)), and for the preparation of the state \(x\) is sampled from the uniform distribution, i.e., \(p_x  = 1/|X|\).
    
    Moreover, without loss of generality, it is assumed that Bob performs a measurement whose outcomes are vectors \({\textbf{m}} = (x^{(1)}, \dots, x^{(|\mathcal{E}|)}) \in X^\mathcal{E}\).
    And depending on the encoding \(e\in \mathcal{E}\) that Bob learns (given to them by Alice) after measuring, Bob will output the guess \(x^{(e)}\).
    
    \begin{lemma}[\cite{PRA:GW10}]\label{lemma:pi}
        Let \(|X|\) be the number of possible strings, and suppose that the joint distribution over strings and encodings satisfies \(p_{x,e} = p_e/|X|\), where the distribution \(\{p_e\}_e\) is arbitrary. Then
        \begin{equation*}
    \Pr_{\mathrm{guess}}^{\mathrm{PI}}[x|E, P] \leq \frac{1}{|X|} \Tr\left[\left(\sum_{\textbf{m}\in X^{\mathcal{E}}} \rho_{\textbf{m}}^\alpha \right)^{\frac{1}{\alpha}}\right]
        \end{equation*}
        for all \(\alpha>1\), where \(E = \{\rho_{x^{(e)},e}\}_{x\in X, e\in \mathcal{E}}\) is the ensemble of all possible states of messages and encodings,     \( P=\{p_{x,e}\}_{x\in X, e\in \mathcal{E}}\) its associated probability distribution
        and \(\rho_{\textbf{m}} = \sum_{e=1}^{{e=|\mathcal{E}|}} p_e\; \rho_{x^{(e)},e}\), the state that corresponds to some outcome vector \(\textbf{m}\).
    \end{lemma}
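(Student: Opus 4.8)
The plan is to recast $\Pr_{\mathrm{guess}}^{\mathrm{PI}}$ as the optimal success probability of an ordinary minimum-error state-discrimination problem for a suitably chosen ensemble, and then to bound that value by weak semidefinite-programming duality, using a dual witness whose feasibility follows from operator monotonicity of $t\mapsto t^{1/\alpha}$.

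First I would spell out the optimal adversary. By the normal form recalled just before the statement, Bob performs a single POVM $\{M_{\textbf{m}}\}_{\textbf{m}\in X^{\mathcal{E}}}$ and, once told the encoding $e$, announces the $e$-th entry $m^{(e)}$ of the obtained outcome vector $\textbf{m}$; any strategy (measure with arbitrary outcomes, then output a guess depending on the outcome and on $e$) reduces to this after relabelling outcomes by the induced guessing function and coarse-graining the POVM. Hence the probability that Bob's announcement equals Alice's string is
\begin{equation*}
\sum_{x,e}p_{x,e}\sum_{\textbf{m}:m^{(e)}=x}\Tr\left[\rho_{x,e}M_{\textbf{m}}\right]
=\sum_{\textbf{m}}\sum_{e}p_{m^{(e)},e}\,\Tr\left[\rho_{m^{(e)},e}M_{\textbf{m}}\right].
\end{equation*}
Now I would use the hypothesis $p_{x,e}=p_e/|X|$: it pulls the common factor $1/|X|$ out of the sum and lets the inner sum over $e$ collapse into the single operator $\rho_{\textbf{m}}=\sum_e p_e\,\rho_{m^{(e)},e}$, so that optimizing over POVMs yields
\begin{equation*}
\Pr_{\mathrm{guess}}^{\mathrm{PI}}[x|E,P]=\frac{1}{|X|}\max_{\{M_{\textbf{m}}\}}\sum_{\textbf{m}\in X^{\mathcal{E}}}\Tr\left[\rho_{\textbf{m}}M_{\textbf{m}}\right],
\end{equation*}
i.e.\ $1/|X|$ times the minimum-error discrimination value for the (generally subnormalized) ensemble $\{\rho_{\textbf{m}}\}_{\textbf{m}\in X^{\mathcal{E}}}$, with the priors $p_e$ already absorbed.

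Second I would bound the discrimination value by weak duality. For any Hermitian $Y$ with $Y\geq\rho_{\textbf{m}}$ for all $\textbf{m}$, positivity of the $M_{\textbf{m}}$ gives $\Tr[(Y-\rho_{\textbf{m}})M_{\textbf{m}}]\geq 0$, and summing over $\textbf{m}$ yields $\sum_{\textbf{m}}\Tr[\rho_{\textbf{m}}M_{\textbf{m}}]\leq\Tr[Y\sum_{\textbf{m}}M_{\textbf{m}}]=\Tr[Y]$. The candidate witness is $Y=\left(\sum_{\textbf{m}\in X^{\mathcal{E}}}\rho_{\textbf{m}}^{\alpha}\right)^{1/\alpha}$: for each fixed $\textbf{n}$ one has $\sum_{\textbf{m}}\rho_{\textbf{m}}^{\alpha}\geq\rho_{\textbf{n}}^{\alpha}$ since the discarded terms are positive semidefinite, and because $t\mapsto t^{1/\alpha}$ is operator monotone for $\alpha>1$ (L\"owner--Heinz, as $0<1/\alpha<1$), this implies $Y\geq(\rho_{\textbf{n}}^{\alpha})^{1/\alpha}=\rho_{\textbf{n}}$, so $Y$ is feasible. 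Substituting $\Tr[Y]$ into the two displays above gives exactly
\begin{equation*}
\Pr_{\mathrm{guess}}^{\mathrm{PI}}[x|E,P]\leq\frac{1}{|X|}\Tr\left[\left(\sum_{\textbf{m}\in X^{\mathcal{E}}}\rho_{\textbf{m}}^{\alpha}\right)^{1/\alpha}\right].
\end{equation*}

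The only genuinely delicate step is the first one: I must check carefully that the ``measure once, then read off the $e$-th component'' normal form is truly without loss of generality (including that coarse-graining the POVM preserves the success probability), and that the product structure with uniform $x$, i.e.\ $p_{x,e}=p_e/|X|$, is precisely what is needed both to factor out $1/|X|$ and to bundle the $p_e$'s into one operator $\rho_{\textbf{m}}$ per outcome --- without uniformity the inner sum would not collapse and the argument breaks. Once the problem is in minimum-error-discrimination form, the rest is the short weak-duality estimate plus the L\"owner--Heinz theorem, so the operator-monotonicity step is the conceptual heart of the bound but not where the effort lies.
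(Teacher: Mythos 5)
The paper cites this lemma from \cite{PRA:GW10} and does not reproduce a proof, so there is no in-paper argument to compare against; judged on its own terms, your proof is correct. The normal-form reduction to a single POVM $\{M_{\mathbf{m}}\}$ indexed by guess vectors (with the coarse-graining preserving success probability), the use of $p_{x,e}=p_e/|X|$ to pull out $1/|X|$ and fold the $p_e$'s into $\rho_{\mathbf{m}}=\sum_e p_e\,\rho_{m^{(e)},e}$, the weak-duality estimate $\sum_{\mathbf{m}}\Tr[\rho_{\mathbf{m}}M_{\mathbf{m}}]\leq\Tr[Y]$ for any $Y\succeq\rho_{\mathbf{m}}$, and the feasibility of $Y=\bigl(\sum_{\mathbf{m}}\rho_{\mathbf{m}}^{\alpha}\bigr)^{1/\alpha}$ via operator monotonicity of $t\mapsto t^{1/\alpha}$ (L\"owner--Heinz, $0<1/\alpha<1$) are all valid, and this mirrors the dual-witness derivation used in the cited reference.
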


\subsection{Oblivious Transfer}\label{subsec:ot}
    \textit{\acrlong{OT}} is a protocol between two parties, a \sender{} and a \receiver{}, and can be formulated in different but equivalent functionalities~\cite{C:C87}.
    The most common and perhaps most useful formulation is the \textit{1-out-of-2 \gls{OT}}~\cite{CACM:EGL85}, where two messages are sent by a \sender{} to a \receiver{}, and the \receiver{} is only able to recover one message of its choice with the \sender{} remaining oblivious to which message was received. This intuition is made precise in Definition~\ref{def:ot} by bounding the distance (as given by the trace-norm \(\|A\|_1  =\Tr \sqrt{ \left( A^*A \right)}\)) of the ideal state containing no information useful for cheating, and the actual state produced from a cheating strategy.
    
    \begin{definition}[1-out-of-2 Oblivious Transfer]\label{def:ot}
        A 1-out-of-2 \acrlong{OT} protocol is a protocol between two parties, a \sender{} and a \receiver{}, where the \sender{} has inputs \(x_0,x_1\in\{0,1\}\) and no output, and the \receiver{} has input \(y\in\{0,1\}\) and output \(m\), such that the following properties hold:
        \begin{itemize}
            \item ($\varepsilon$-Correctness) For an honest \sender{} and \receiver{}, \(\Pr[m=x_y\, |\, x_0,x_1,y]\geq 1-\varepsilon\).
            \item ($\varepsilon$-\receiver{}-security): Let \(\rho_{y,x_0,x_1;\tilde{\mathsf{S}}}\) be the state at the end of the protocol with an honest \receiver{} and in the presence of a malicious \sender{}, \(\tilde{\mathsf{S}}\). Then, for all algorithms \(\tilde{\mathsf{S}}\), there exists \((x_0,x_1)\in \{0,1\}^2\), such that \(\Pr[m=x_y]\geq 1-\varepsilon\) and 
            \[
                \left\|\rho_{y,x_0,x_1;\tilde{\mathsf{S}}} - \rho_y \otimes \rho_{x_0,x_1;\tilde{\mathsf{S}}}\right\|_1 \leq \varepsilon.
            \] 
            \item ($\varepsilon$-\sender{}-security) Let \(\rho_{y,x_0,x_1;\tilde{\mathsf{R}}}\) be the state at the end of the protocol with an honest sender and in the presence of a malicious \receiver{}, \(\tilde{\mathsf{R}}\). Then, for all algorithms \(\tilde{\mathsf{R}}\), exists \(y\in \{0,1\}\), such that
            \[
                \left\|\rho_{x_{1-y},x_y,c;\tilde{\mathsf{R}}} -  \frac{\mathds{1}}{2} \otimes \rho_{x_y,y;\tilde{\mathsf{R}}}\right\|_1 \leq \varepsilon.
            \]
        \end{itemize}
        If these properties only hold when restricting the algorithms \(\tilde{\mathsf{S}}\) or \(\tilde{\mathsf{R}}\) to run in probabilistic polynomial time, then the protocol is said to be computationally-secure. 
    \end{definition}

    Despite its simplicity, \gls{OT} is a fundamental primitive in cryptography, and it was shown to be sufficient to construct \gls{MPC}~\cite{STOC:Kilian88}.
    However, no black-box construction of \gls{OT} can exist given only \glspl{OWF} in the classical world~\cite{STOC:IR89}, meaning that \gls{PKC} was compulsory.
    Nevertheless, by also accounting for quantum computation and communication and quantum-secure \glspl{OWF}, \gls{OT} can be achieved without any \gls{PKC} requirement~\cite{C:BCKM21,EC:GLSV21}.
    This means that introducing quantum computation and communication substantially relaxes the requirements to construct \gls{OT}, as candidates for quantum-secure \glspl{OWF} are simpler and more frequent.

\subsection{One-Time Memory and One-Time Programs}
\textit{\acrlong{OTM}} was introduced in~\cite{C:GKR08} as a secure hardware device inspired by the cryptographic protocol of \gls{OT}.
It executes exactly the same functionality as the 1-out-of-2 (chosen-input) \gls{OT} (as presented in Section~\ref{subsec:ot}), defined in~\cite{C:GKR08,TCC:GIS+10}.
However, since it consists of a physical tamper-proof memory with restricted read and write access, it effectively enforces a one-shot \gls{OT} execution, where the \sender{} transmits a single message to the \receiver{} by physically transferring the memory object (the \gls{OTM}).
The reliance of \gls{OTM} on secure hardware presents a significant limitation, as designing tamper-proof hardware is notoriously difficult. Consequently, this requirement has since been relaxed, with cryptographic alternatives introduced in the literature, albeit under certain trusted models~\cite{ISC:DDKZ14,ITCS:Liu23}.

Hence, we choose to formulate the definition of \gls{OTM} in terms of the standard \gls{OT} definition, as it will integrate with our other contributions in this paper (non-interactive \gls{OT}).

\begin{definition}[\acrlong{OTM}]\label{def:otm}
    Let \(\mathcal{M}\) be a 1-out-of-2 \gls{OT} protocol as in Definition~\ref{def:ot}.\\
    If \(\mathcal{M}\) is one-shot, i.e., if the protocol executes with a single message from the \sender{} to the \receiver{}, then \(\mathcal{M}\) is a \acrlong{OTM}.
\end{definition}

An important extension of the \gls{OTM} is the \textit{parallel-\gls{OTM}}, which involves the parallel and non-adaptive execution of multiple \glspl{OTM}.
A parallel-\gls{OTM} consists of multiple independent \glspl{OTM} whose inputs must be chosen concurrently and non-adaptively.
That is, the input to one \gls{OTM} cannot be chosen based on the outcome of any other, as it would in a sequential execution.

\medskip

In~\cite{C:GKR08}, the concept of \textit{\acrlong{OTP}} was also introduced, and subsequently further studied in~\cite{TCC:GIS+10}.
A \gls{OTP} is a program that models a black-box, which computes some function on a given input a single time, preventing any further interaction with the function.
The \gls{OTM} as a concept is particularly interesting as the means to design \glspl{OTP}.

\begin{theorem}[\cite{C:GKR08,TCC:GIS+10} (Informal)]\label{thm:otp-compiler}
    Assuming the existence of one-way functions, there is a polynomial-time compiler that takes any polynomial-time program and a parallel-\gls{OTM}, and returns an \gls{OTP} with the same functionality.
\end{theorem}

\noindent
The original compiler from \gls{OTM} to \gls{OTP} of~\cite{C:GKR08} is heavily based on the technique of Yao's garbled circuits~\cite{FOCS:Yao86} (analogous to using \gls{OT} for \gls{2PC}).
However, Yao's garbled circuits only provide semi-honest security, and in order to guarantee security of the \gls{OTP} against malicious adversaries in a non-interactive manner, \cite{C:GKR08} presents a solution where the \glspl{OTM} output is XORed to mask the output of the garbed circuit.

\subsection{Restricted Quantum-Storage Models}
    Again, it is impossible to achieve unconditional security of \gls{OT} and \gls{BC} without any imposed assumption.
    Therefore, to avoid supporting the security of a protocol on conjectures on computationally-hard problems (e.g., \glspl{OWF} or \gls{PKC}), restrictions to the computation model based on physical phenomenons (motivated by current technology limitations) were introduced for \gls{BC} and \gls{OT}.
    Two main restrictions to the quantum-storage capability of the parties have been introduced.
    First, restrictions to the \textit{storage-space}, either in the dimension of the quantum states that a party can coherently measure~\cite{C:Salvail98}, or on the total size of the storage available~\cite{FOCS:DFSS05}.
    Second, restrictions to the \textit{storage-time} (duration) that a quantum state can be stored before being subjected to quantum decoherence~\cite{PRL:WST08}.

\subsubsection{Bounded-Quantum-Storage Model}
    The \textit{\acrlong{BQSM}}~\cite{FOCS:DFSS05} establishes that there is a point during the protocol, called the \textit{memory bound}, when all but \(M\) qubits of the (otherwise unbounded) memory register of the parties are measured.
    Besides this transient limitation during the execution of the protocol, no restrictions are applied to the classical memory and computing power, which are still considered unbounded.
    
    The functionality of the \gls{BQSM} is described in Definition~\ref{def:bqsm}.
    In this work, it will be assumed that the time instant \(t\) and memory size \(M\) of Definition~\ref{def:bqsm} are set in advance, when designing a protocol in the \gls{BQSM}.
    \begin{definition}[Bounded-Quantum-Storage Model]\label{def:bqsm}
        The \acrlong{BQSM} consists of two identically modeled computation phases \(\mathcal{P}_\mathrm{pre}, \mathcal{P}_\mathrm{post}\), discontinued by a partial measurement of the memory register of the parties \(\mathcal{M}_{t,M}\), where (in chronological order):
        \begin{enumerate}
            \item \(\mathcal{P}_\mathrm{pre}\): the state of a party may have an arbitrary number of qubits (\(N\)), and arbitrary computations are allowed over this system.
            \item \(\mathcal{M}_{t, M}\): at a certain point in time \(t\), the memory bound applies, i.e., all but \(M\leq N\) qubits are measured.
            \item \(\mathcal{P}_\mathrm{post}\): the party is again unbounded in quantum memory and computing power.
        \end{enumerate}
    \end{definition}

\subsubsection{Noisy-Quantum-Storage Model}
    Generalizing the \gls{BQSM} to a more realistic noisy-memory model is left as an open question in \cite{FOCS:DFSS05}.
    The \textit{\acrlong{NQSM}}~\cite{PRL:WST08,TIT:KWW12} addresses this weaker assumption by considering the quantum memory of the parties performing the protocol to be imperfect due to the presence of noise.
    This model represents a more realistic setting given the current available technology, and does not require an arbitrary estimation of the total memory available to an all-powerful adversary.
    In opposition, any qubit that is stored experiences noise that leads to quantum decoherence.
    
    The functionality of the \gls{NQSM} is given in Definition~\ref{def:nqsm}.
    Again, in this work, it will be assumed that the family \(\{\mathcal{F}_t\}\) of Definition~\ref{def:nqsm} is known in advance when designing a protocol in the \gls{NQSM}.
    Note that the \gls{BQSM} is a particular case of the \gls{NQSM}, where \(\mathcal{F}_t= \mathds{1}\) for all \(t\) but the dimension of \(\mathcal{H}_\mathrm{in}\) is bounded.
    
    \begin{definition}[Noisy-Quantum-Storage Model]\label{def:nqsm}
        Let \(\rho \in \mathcal{L}(\mathcal{H}_\mathrm{in})\) be a quantum state stored in a quantum memory.
        The \acrlong{NQSM} prescribes a family of completely positive trace-preserving functions \(\{\mathcal{F}_t\}_{t \geq 0}\), such that the content of the memory after a certain time \(t\) is a state \(\mathcal{F}_t(\rho)\), where \(\mathcal{F}_t : \mathcal{L}(\mathcal{H}_\mathrm{in}) \to \mathcal{L}(\mathcal{H}_\mathrm{out})\), and 
        \[\mathcal{F}_0 = \mathds{1}\quad \mathrm{and}\quad \mathcal{F}_{t_1 + t_2} = \mathcal{F}_{t_1} \circ \mathcal{F}_{t_2},\]
        i.e., noise in storage only increases with time.
    \end{definition}
    
    To enable an analysis of the relation between the storage size and the probability of successfully decoding stored states, it is often considered that the memory is composed by \(N\) different cells and that noise affects these cells separately, i.e., \(\mathcal{F} = \mathcal{N}^{\otimes N}\).
    Then for a large enough \(N\), the probability that a party can decode some rate \(R\) (above the classical capacity of the channel, \(C_\mathcal{N}\)) of its quantum memory decays exponentially with \(N\)~\cite{TIT:KWW12}:
    \begin{align}\label{eq:nqsm}
        \Pr_{\mathrm{succ}}^{\mathcal{N}^{\otimes N}}[N R] & \leq 2^{N\cdot \gamma^\mathcal{N}(R)}, \\
        \gamma^{\mathcal{N}}(R) &> 0 \quad \text{for all} \quad R>C_\mathcal{N}.\nonumber
    \end{align}
    An example of noisy channel is the \(d\)-dimension depolarizing channel \(\mathcal{N}_r:\mathcal{L}(\mathcal{H}) \to \mathcal{L}(\mathcal{H})\), for \(d\geq 2, 0\leq r\leq1\),
    \begin{equation}\label{eq:depolarizingchannel}
        \mathcal{N}_r (\rho) \to r\rho + (1-r)\frac{\mathds{1}}{d},
    \end{equation}
    which gradually converts a stored state \(\rho\) to a maximally mixed state with probability \(1-r\).
    
    Note that the assumption that \(\mathcal{F} = \mathcal{N}^{\otimes N}\) considers storing each qubit independently.
    This means that even if two qubits are entangled, the entanglement is not affected by more than the independent noise that each qubit undergoes by itself, which still leads to the degradation of the entanglement.

\section{Non-Interactive OT}\label{sec:ITProtocol}
    In this section, our first contribution of an unconditionally-secure two-message non-interactive 1-out-of-2 \gls{OT} is presented.
    A construction is given in the \gls{NQSM} and its security is proved in this model.
    Also, an alternative construction for a non-interactive 1-out-of-2 \gls{OT} is presented, which removes the time-delay constraint of the previous \gls{NQSM} construction in exchange for adopting the \gls{BQSM}.

    Regarding the \gls{NQSM} (Definition~\ref{def:nqsm}), we will make a simplification by parameterizing our protocol by a time bound \(\tau\) that enforces total decoherence of the memories of the parties.
    This model may, for instance, be interpreted as a depolarizing channel (Equation~\eqref{eq:depolarizingchannel}) that after \(\tau\) time steps erases all information about state \(\rho\), i.e., \({(\mathcal{N}_r)}^\tau (\rho) = \mathds{1}/d\).
    One could instead study different noise models and the dependence of the security of the protocol with the noise level at any point in time \(t<\tau\).
    We explicitly choose to parameterize our protocol directly by the time to total decoherence \(\tau\), as it represents the worst case scenario for an adversary.
    Also, this closely relates the \gls{BQSM} as the limit of the \gls{NQSM}.

\subsection{Preliminaries}
\label{subsec:Prelims}

We first introduce some basic definitions and  notation for key elements of the protocol. Let \([N] := \{n \; |\; n \in \mathbb{N} \;\text{and}\; n \leq N\}\), and \(x = x_0\; x_1 \in X = \{00, 01, 10, 11\}\) be  the \textit{message}.

\begin{definition}[\(N\)-qubit register]
    We refer to a set of \(N\) qubits, \(\mathsf{R} = \{\mathsf{q}_1, \dots, \mathsf{q}_i, \dots, \mathsf{q}_N\}\), indexed by \(i \in [N]\),  as an \textit{\(N\)-qubit register}. An element, \(\mathsf{q}_i\), of the register is interpreted as the physical system at the \(i\)-th site, rather than the operational description of its quantum system.
\end{definition}

\begin{definition}[Index-encoding set]
    Let the \textit{index-encoding set} be a set of tuples \(\mathcal{E} := \{(k,\ell) \;|\; k, l \in [N] \;\textrm{and}\; k < l \}\), where \(|\mathcal{E}| = \binom{N}{2} = N(N-1)/2\). 
    Then, the set \(\mathcal{E}\) is the set of ordered tuples \((k,\ell)\) where \(k < l\), such that an element of the index-encoding set selects a pair of distinct sites \((\mathsf{q}_k, \mathsf{q}_\ell)\) of the register.
\end{definition}

\begin{definition}[Sub-Register]
    Let \(\mathsf{R}\,\backslash\,{\{\mathsf{q}_{i_1},\dots, \mathsf{q}_{i_n}\}}\) be an \textit{\((N-n)\)-qubit sub-register} of \(\mathsf{R}\), indexed by \([N]\backslash{\{i_1,\dots,i_n\}}\). We write \(\rho_{[N]\backslash{\{i_1,\dots,i_n\}}}:= \rho_1 \otimes \cdots \otimes \rho_{i_1-1} \otimes \rho_{i_1+1} \otimes \cdots \otimes \rho_{i_n-1} \otimes \rho_{i_n+1} \otimes \cdots \otimes \rho_N\), to denote that the quantum state in each site \(j\) of the sub-register is equal to \(\rho\), i.e., \(\rho_j = \rho\) for all \(j\).
\end{definition}

\begin{definition}[Message encoding vector]\label{def:encoding}
    Given the set of all possible assignments from the index-encodings to the messages \(X^{\mathcal{E}} = \left\{ \textbf{m}_1, \dots, \textbf{m}_{4^{|\mathcal{E}|}}\right\}\), let the \textit{message encoding vector} be the specific assignment \(\textbf{m}_i \), which is explicitly  denoted as \(\textbf{m}_i =\left(\langle x_0^{(k)}\, x_1^{(\ell)}\rangle_i\;|\; (k,\ell) \in \mathcal{E} \right)\). 
\end{definition}

Note that there are \(4^{|\mathcal{E}|}\) possible message encoding vectors, and each vector \(\textbf{m}_i\) has \(|\mathcal{E}|\) entries. One may assume that the index \(i\) gives the placement of the vector in lexicographical order, for example, \(\textbf{m}_1 =\left(\langle 0^{(k)}\, 0^{(\ell)}\rangle_1\;|\; (k,\ell) \in \mathcal{E} \right)\) and \(\textbf{m}_{4^{|\mathcal{E}|}} =\left(\langle 1^{(k)}\, 1^{(\ell)}\rangle_1\;|\; (k,\ell) \in \mathcal{E} \right)\).
While the previous definition assumes a level of generality where the message content could be correlated with the index-encoding, this is not something we consider in the proposed protocol.
We assume that the index-encodings \((k,\ell)\) are randomly sampled and independent of the chosen message \(x_0,x_1\). Nevertheless, we adopt this level of generality as it will be required when proving security, namely, when using the discrimination framework with post-measurement classical information of~\cite{PRA:GW10} (see Section \ref{subsec:pi}).

\begin{definition}[Message encoding state]\label{def:mes}
    Let \(\textbf{m}_i =\left(\langle x_0^{(k)}\, x_1^{(\ell)}\rangle_i\;|\, (k,\ell) \in \mathcal{E} \right)\) be a message encoding vector as in Definition~\ref{def:encoding}, then, its associate \textit{message encoding state} is given by
     \begin{equation*}
    \rho_{\textbf{m}_i} =  \rho_{\langle x_0^{(k)} x_1^{(\ell)}\rangle_i} = \frac{1}{ |\mathcal{E} |\cdot 2^{N-2}}\left(\sum_{k<\ell}\ketbra{B_{\langle x_0^{(k)} x_1^{(\ell)}\rangle_i}}_{k,\ell}\otimes \mathds{1}_{[N]\backslash {\{k,\ell\}}}\right),
    \end{equation*} 
    which describes the density matrix for the \(N\)-qubit register \(\mathsf{R}\) in full generality, allowing the message to depend on the uniformly sample index-encodings \(\mathcal{E}\).  
\end{definition}

It will also be useful to consider the unnormalized version of the state \(\sigma_{\mathbf{m}_i} = \rho_{\mathbf{m}_i} \cdot {|\mathcal{E}|\cdot2^{N-2}}\).

\begin{lemma}\label{lemma:weylineq}
   Let \( A, B \in M_n \)  be Hermitian  positive semi-definite matrices.  Then, \(
\lambda_{\textup{max}}(A+B) \leq \lambda_{\textup{max}}(A) + \lambda_{\textup{max}}(B).
\)
\end{lemma}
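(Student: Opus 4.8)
The plan is to reduce the claim to the variational (Rayleigh--Ritz) characterization of the top eigenvalue of a Hermitian matrix. For any Hermitian \(M \in M_n\) one has \(\lambda_{\textup{max}}(M) = \max_{\norm{v}=1} \langle v, M v\rangle\), where the maximum ranges over unit vectors \(v \in \mathbb{C}^n\); this follows from the spectral theorem by expanding \(v\) in an orthonormal eigenbasis of \(M\), and the maximum is attained because the unit sphere is compact and \(v \mapsto \langle v, Mv\rangle\) is continuous. Both \(A\) and \(B\) (and hence \(A+B\)) are Hermitian, so this identity applies to all three.

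The key step is then a single line. Let \(v^\star\) be a unit vector achieving \(\lambda_{\textup{max}}(A+B) = \langle v^\star, (A+B) v^\star\rangle\). By linearity of the inner product, \(\langle v^\star, (A+B) v^\star\rangle = \langle v^\star, A v^\star\rangle + \langle v^\star, B v^\star\rangle\), and each summand is bounded by the corresponding maximum, namely \(\langle v^\star, A v^\star\rangle \leq \lambda_{\textup{max}}(A)\) and \(\langle v^\star, B v^\star\rangle \leq \lambda_{\textup{max}}(B)\). Adding these inequalities gives \(\lambda_{\textup{max}}(A+B) \leq \lambda_{\textup{max}}(A) + \lambda_{\textup{max}}(B)\), as desired.

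An equivalent route, which uses the positive semi-definiteness hypothesis more directly, is to note that for a positive semi-definite matrix \(M\) the largest eigenvalue equals the operator norm, \(\lambda_{\textup{max}}(M) = \norm{M}_{\mathrm{op}}\); since \(A+B\) is again positive semi-definite, the claim follows at once from the triangle inequality \(\norm{A+B}_{\mathrm{op}} \leq \norm{A}_{\mathrm{op}} + \norm{B}_{\mathrm{op}}\). There is no real obstacle in either argument --- it is an elementary special case of Weyl's eigenvalue inequalities --- and the only point worth flagging is that positive semi-definiteness is not actually needed for the stated inequality (it holds for arbitrary Hermitian \(A,B\)); it merely makes the operator-norm reformulation immediate.
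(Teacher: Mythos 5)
Your proposal is correct, and in fact you give two valid arguments. The second route — identifying \(\lambda_{\textup{max}}\) with the operator norm for positive semi-definite matrices and invoking the triangle inequality — is exactly the paper's proof (the paper phrases it via the spectral norm \(\|\cdot\|_2\), which is the same object for Hermitian matrices). Your first route, via the Rayleigh--Ritz variational characterization \(\lambda_{\textup{max}}(M) = \max_{\|v\|=1}\langle v, Mv\rangle\), is a genuinely different and somewhat more elementary argument: it avoids any appeal to norm axioms and works directly from the spectral theorem. You are also right to flag that positive semi-definiteness is not actually used in the Rayleigh--Ritz argument — the inequality \(\lambda_{\textup{max}}(A+B) \leq \lambda_{\textup{max}}(A) + \lambda_{\textup{max}}(B)\) holds for any Hermitian \(A, B\), and the PSD hypothesis in the paper's statement is only there to make the norm reformulation (\(\|M\|_2 = \lambda_{\textup{max}}(M)\) rather than \(\max_i|\lambda_i|\)) immediate. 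So the variational proof both simplifies the argument and establishes a slightly more general statement than the paper claims.
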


\begin{proof}
 The spectral norm of a Hermitian matrix \( M \), denoted \( \|M\|_2 \), is equal to the largest eigenvalue in magnitude, i.e., \(\|M\|_2 = \max_i \{|\lambda_i|\},\) where \( \lambda_i \) are the eigenvalues of \(M\). Since \(A\) and \(B\) are also positive semi-definite, all their eigenvalues are non-negative. Therefore, the spectral norm of \(A\) and \(B\) becomes \(
\|A\|_2 = \lambda_{\max}(A), \; \|B\|_2 = \lambda_{\max}(B),
\) respectively. 

The triangle inequality for the spectral norm states that \(
\|A + B\|_2 \leq \|A\|_2 + \|B\|_2.
\) Since \(A+B \) is also Hermitian and positive semi-definite we have \(
\|A + B\|_2 = \lambda_{\max}(A + B)
\) and by direct substitution we get
\(
\lambda_{\max}(A + B) \leq \lambda_{\max}(A) + \lambda_{\max}(B).
\)
\end{proof}

Finally, in Lemma~\ref{lemma:lambdasigstar}, we introduce an important lemma giving a maximal eigenvalue upper bound, which will be essential for the security proof.
\begin{lemma}\label{lemma:lambdasigstar}
    Let \(\mathbf{m}_i =\left(\langle x_0^{(k)}\, x_1^{(\ell)}\rangle_i\;|\;(k,\ell) \in \mathcal{E}\right)\) be a message encoding with unnormalized associated message encoding state
    \[\sigma_{\mathbf{m}_i}=\left(\sum_{k<\ell}\ketbra{B_{\langle x_0^{(k)} x_1^{(\ell)}\rangle_i}}_{k,\ell}\otimes \mathds{1}_{[N]\backslash {\{k,\ell\}}}\right).\]
    Then, the largest eigenvalue, \(\lambda_{\textup{max}}(\sigma_{\textbf{m}_{i}})\), is upper bounded by
   \[\lambda_{\textup{max}}(\sigma_{\mathbf{m}_{i}})\leq \frac{N^2}{4} +\frac{N}{4}-\frac{1}{2}.\]
\end{lemma}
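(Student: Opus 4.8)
The plan is to split \(\sigma_{\mathbf{m}_i}\) into \(N-1\) positive semi-definite pieces by grouping the \(\binom{N}{2}\) summands according to their smaller index, to bound \(\lambda_{\textup{max}}\) of each piece separately, and to recombine with Lemma~\ref{lemma:weylineq}. Concretely, for \(m\in[N-1]\) put
\[\sigma^{(m)}:=\sum_{\ell=m+1}^{N}\ketbra{B_{\langle x_0^{(m)} x_1^{(\ell)}\rangle_i}}_{m,\ell}\otimes\mathds{1}_{[N]\backslash\{m,\ell\}},\]
so \(\sigma_{\mathbf{m}_i}=\sum_{m=1}^{N-1}\sigma^{(m)}\). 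Each \(\sigma^{(m)}\) is a sum of \(N-m\) rank-\(2^{N-2}\) projectors, hence Hermitian and positive semi-definite, and iterating Lemma~\ref{lemma:weylineq} gives \(\lambda_{\textup{max}}(\sigma_{\mathbf{m}_i})\le\sum_{m=1}^{N-1}\lambda_{\textup{max}}(\sigma^{(m)})\). Granting the single-``star'' estimate \(\lambda_{\textup{max}}(\sigma^{(m)})\le\tfrac12\big((N-m)+1\big)\), a one-line summation then closes the argument:
\[\sum_{m=1}^{N-1}\frac{(N-m)+1}{2}=\sum_{p=1}^{N-1}\frac{p+1}{2}=\frac12\left(\frac{(N-1)N}{2}+(N-1)\right)=\frac{(N-1)(N+2)}{4}=\frac{N^2}{4}+\frac{N}{4}-\frac12.\]

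The star estimate is where the content lies, and it is a quantitative form of the monogamy of entanglement: the site \(\mathsf{q}_m\) cannot be close to maximally entangled with many of \(\mathsf{q}_{m+1},\dots,\mathsf{q}_N\) simultaneously, so the \(N-m\) Bell projectors in \(\sigma^{(m)}\) cannot be jointly near-saturated and the naive bound \(\lambda_{\textup{max}}(\sigma^{(m)})\le N-m\) is essentially halved. To make this precise I would first restrict to the \((N-m+1)\)-qubit star \(\{m\}\cup\{m+1,\dots,N\}\), on which all terms of \(\sigma^{(m)}\) act non-trivially, and write \(s:=N-m\) for the number of ``leaf'' sites. Since each \(\ket{B_{xy}}\) is maximally entangled, it agrees with \(\ket{B_{00}}\) up to local unitaries on its two sites; absorbing these into a unitary change of basis on the star (which leaves the spectrum unchanged) we may assume every Bell state is \(\ket{B_{00}}=\tfrac{1}{\sqrt2}(\ket{00}+\ket{11})\). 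Using \(\ketbra{B_{00}}_{m,\ell}=\tfrac14\big(\mathds{1}\mathds{1}+X_mX_\ell-Y_mY_\ell+Z_mZ_\ell\big)\) and collecting the leaf Paulis into the collective raising/lowering and \(J_z\) operators of the \(s\) leaves (\(J_z=\tfrac12\sum_\ell Z_\ell\)), a short computation gives \(\sigma^{(m)}=\tfrac{s}{4}\mathds{1}+\tfrac12 N_m\) with \(N_m=\ketbra{0}{1}_m\otimes J_++\ketbra{1}{0}_m\otimes J_-+Z_m\otimes J_z\).

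The crucial step is then the operator identity \(N_m^2=N_m+\vec J^2\), which falls out of the \(2\times2\) block form of \(N_m\) together with \([J_z,J_\pm]=\pm J_\pm\) and \(J_\pm J_\mp=\vec J^2-J_z^2\pm J_z\). Since \(\vec J^2\succeq0\) and \([\vec J^2,N_m]=0\), any eigenvalue \(\nu\) of \(N_m\) satisfies \(\nu^2-\nu\in[0,\lambda_{\textup{max}}(\vec J^2)]\), whence \(\nu\le\tfrac12\big(1+\sqrt{1+4\lambda_{\textup{max}}(\vec J^2)}\big)\); and from \(\vec J^2=\tfrac14\big(3s\,\mathds{1}+\sum_{\ell\neq\ell'}\vec\sigma^{(\ell)}\!\cdot\vec\sigma^{(\ell')}\big)\) with \(\vec\sigma^{(\ell)}\!\cdot\vec\sigma^{(\ell')}=2\,\mathrm{SWAP}_{\ell\ell'}-\mathds{1}\preceq\mathds{1}\) one gets \(\lambda_{\textup{max}}(\vec J^2)\le\tfrac{s(s+2)}{4}\). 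Substituting, \(\lambda_{\textup{max}}(N_m)\le\tfrac12\big(1+\sqrt{1+s(s+2)}\big)=\tfrac{s+2}{2}\), so \(\lambda_{\textup{max}}(\sigma^{(m)})\le\tfrac{s}{4}+\tfrac12\cdot\tfrac{s+2}{2}=\tfrac{s+1}{2}=\tfrac12\big((N-m)+1\big)\), as needed.

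I expect the step with the most subtlety to be exactly this star estimate — converting the monogamy heuristic into a clean inequality rather than settling for the trivial \(\lambda_{\textup{max}}(\sigma^{(m)})\le N-m\); the reduction to the canonical Bell state and the identity \(N_m^2=N_m+\vec J^2\) are the moves that keep it elementary. Everything else — the star decomposition and the recombination via Lemma~\ref{lemma:weylineq} — is then routine bookkeeping.
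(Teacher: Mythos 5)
Your proof is correct, and it follows the same overall architecture as the paper's: decompose the complete-graph state \(\sigma_{\mathbf{m}_i}\) into star states (the paper peels off the star centred at vertex \(n\) recursively, \(\sigma_N = \sigma_{N-1} + \sigma_{star(N)}\); you group by smallest index \(m\) — same edge partition of \(K_N\), just indexed from the other end), apply Lemma~\ref{lemma:weylineq}, and supply a per-star spectral bound \(\lambda_{\max}\le\tfrac{s+1}{2}\) for a star with \(s\) leaves, then sum. The place you genuinely diverge is in how you prove the star bound. The paper reduces, via local unitaries to the singlet \(\ket{B_{11}}\), to the Heisenberg-star Hamiltonian \(\mathbf{H}^{(N)}_{star}\) and \emph{cites} the ground-state energy \(\lambda_{\min}(\mathbf{H}^{(N)}_{star}) = -(1+N)/4\) from the many-body literature~\cite{RICHTER19951497}. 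You instead prove it from scratch: after reducing to \(\ket{B_{00}}\), you write \(\sigma^{(m)} = \tfrac{s}{4}\mathds{1}+\tfrac12 N_m\) with \(N_m=\ketbra{0}{1}_m\!\otimes\!J_+ +\ketbra{1}{0}_m\!\otimes\!J_- + Z_m\!\otimes\!J_z\), derive the operator identity \(N_m^2 = N_m + \vec J^2\) from the \(2\times 2\) block structure and the \(\mathfrak{su}(2)\) relations, and combine it with \(\lambda_{\max}(\vec J^2)\le s(s+2)/4\) to conclude \(\lambda_{\max}(N_m)\le (s+2)/2\). I checked the algebra — the Pauli decomposition of \(\ketbra{B_{00}}\), the collection of the \(XX-YY\) terms into \(J_\pm\), the block computation of \(N_m^2\), and the \(\vec J^2\) bound via \(\vec\sigma\cdot\vec\sigma\preceq\mathds{1}\) — and it all holds. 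What your version buys is self-containment: a reader need not trust an external spin-chain reference, and the identity \(N_m^2 = N_m + \vec J^2\) is a crisp way to see that the eigenvalues of \(N_m\) are exactly \(j+1\) or \(-j\) over the spin sectors of the leaves, which also explains why the paper's claimed \emph{equality} \(\lambda_{\max}(\sigma_{star(N)})=N/2\) holds (taking \(j=s/2\)). The paper's route is shorter but opaque without the citation.
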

\begin{proof}
Let us start by defining a shorthand notation, where we also make explicit the terms \(\langle x_0^{(k)} \, x_1^{(\ell)}\rangle_i\) of the message encoding in the state and the size of the register \(N\),
\begin{equation}
\label{eq:sigmastate}
    \sigma_{N}^{\langle x_0^{(k)} x_1^{(\ell)}\rangle_i} = \sum_{k<\ell} \mathbb{B}_{\langle x_0^{(k)} x_1^{(\ell)}\rangle_i}
\end{equation}
with  
\begin{equation}
    \mathbb{B}_{\langle x_0^{(k)} x_1^{(\ell)}\rangle_i} =\ketbra{B_{\langle x_0^{(k)} x_1^{(\ell)}\rangle_i}}_{k,\ell}\otimes \mathds{1}_{[N]\backslash {\{k,\ell\}}}.
\end{equation}
For an \(N\)-qubit register, the previous state \( \sigma_{N}^{\langle x_0^{(k)} x_1^{(\ell)}\rangle_i} \) can be interpreted as a sum over the \(   |\mathcal{E}|=\binom{N}{2}\) edges of the complete graph \(K_N\), where each vertex represents a qubit and each edge connects qubits \(k\) and \(\ell\), and is given by state \(\mathbb{B}_{\langle x_0^{(k)} x_1^{(\ell)}\rangle_i}\) for a specific message encoding \(\mathbf{m}_i\). Noticing this, we can rewrite Equation \eqref{eq:sigmastate} by separating the summation domain over the edges into two disjoint subsets as follows
\begin{equation}\label{eq:CGplusSG}
    \sigma_{N}^{\langle x_0^{(k)} x_1^{(\ell)}\rangle_i} = \sigma_{N-1}^{\langle x_0^{(k)} x_1^{(\ell)}\rangle_i} + \sigma_{star(N)}^{\langle x_0^{(k)} x_1^{(\ell)}\rangle_i},
\end{equation} where 
\begin{equation}\label{eq:starstatee}
    \sigma_{star(N)}^{\langle x_0^{(k)} x_1^{(\ell)}\rangle_i} = \sum_{j=1}^{N-1}\mathbb{B}_{\langle x_0^{(j)} x_1^{(N)}\rangle_i}
\end{equation}
is the unnormalized mixture of all Bell pairs involving the $N$th qubit. Using the graph interpretation described above, such state can be seen as a star graph with its center at the $N$th vertex, the latter being connected to all other $N-1$ vertices.  
This relation can be applied recursively,  allowing the expression of the \( \sigma_{N}^{\langle x_0^{(k)} x_1^{(\ell)}\rangle_i} \) state as a sum of  \(\sigma_{star(n)}^{\langle x_0^{(k)} x_1^{(\ell)}\rangle_i}\) states, for $n\in \{2, .., N\}$.

Since the states in Equation~\eqref{eq:CGplusSG} correspond to Hermitian positive semi-definite matrices, applying Lemma \ref{lemma:weylineq} we get the following upper bound for the maximum eigenvalue, 
\begin{equation}
\label{recursion}
     \lambda_{\textup{max}}\left(\sigma_{N}^{\langle x_0^{(k)} x_1^{(\ell)}\rangle_i}\right) \leq  \lambda_{\textup{max}} \left(\sigma_{N-1}^{\langle x_0^{(k)} x_1^{(\ell)}\rangle_i}\right) +  \lambda_{\textup{max}}\left(\sigma_{star(N)}^{\langle x_0^{(k)} x_1^{(\ell)}\rangle_i}\right).
\end{equation} 

Next, notice that we can apply local unitary transformations at each \(j\) qubit to transform it into any Bell pair of our choosing, and since the spectrum is invariant under unitary transformations we have that, for all \(i\),
\begin{equation}
    \lambda_{\textup{max}}\left(\sigma_{star(N)}^{\langle x_0^{(k)} x_1^{(\ell)}\rangle_i}\right) = \lambda_{\textup{max}}(\sigma_{star(N)}).
\end{equation}
Without loss of generality, let us consider \(\ketbra{B_{11}}\), obtained by applying \(\big(Z^{{x_0^{(j)} \oplus 1}}X^{{x_1^{(j)}}\oplus 1}\big)_j \otimes \mathds{1}_c \) to \(\mathbb{B}_{\langle x_0^{(j)} x_1^{(c)}\rangle}\). Thus, with the foresight that our attention will lie only in the spectrum of the operators, we can write
\begin{equation}
   \sigma_{star(N)} = \sum_{j=1}^{N-1}\ketbra{B_{ 1 1}}_{j,c}\otimes \mathds{1}_{[N]\backslash {\{j,c\}}}.
\end{equation}

Rewriting the Bell state in terms of the Pauli matrices (Equation~\eqref{Pauli}) we have
    \begin{align}
    \begin{split}
         \sigma_{star(N)} &= \frac{1}{4}\sum_{j=1}^{N-1}\bigg(\mathds{1}_j \otimes \mathds{1}_c - X_j \otimes X_c- Z_j \otimes Z_c - Y_j \otimes Y_c\bigg) \otimes \mathds{1}_{[N]\backslash \{j,c\}} \\ 
         &= \frac{N-1}{4}\mathds{1}_{[N]}- \frac{1}{4}\left(\sum_{j=1}^{N-1} \textbf{X}_j \cdot \textbf{X}_c+ \textbf{Z}_j \cdot \textbf{Z}_c+ \textbf{Y}_j \cdot \textbf{Y}_c\right).
    \end{split}
    \end{align}
where  \( \textbf{X}_i = \mathds{1}_{1}\otimes \ldots \otimes  {X}_i \otimes \ldots\otimes \mathds{1}_{N}\) such that \(\textbf{X}_j \cdot \textbf{X}_c = X_j \otimes X_c \otimes \mathds{1}_{[N]\backslash\{j,c\}}\), and similarly for \(\textbf{Y}_i\) and \(\textbf{Z}_i\).

Finally, let us rewrite the previous expression as
\begin{equation}
   \sigma_{star(N)}= \frac{N-1}{4}\mathds{1}_{[N]} - \mathbf{H}^{(N)}_{star}, 
\end{equation} where \begin{equation} \mathbf{H}^{(N)}_{star} = \frac{1}{4}\left(\sum_{j=1}^{N-1} \textbf{X}_j \cdot \textbf{X}_c+ \textbf{Z}_j \cdot \textbf{Z}_c+ \textbf{Y}_j \cdot \textbf{Y}_c\right)\end{equation} is known as the Heisenberg-star spin model in many-body physics \cite{RICHTER19951497}.
Focusing on the largest eigenvalue for \(\sigma_{star(N)}\), we have the following relation, 
\begin{align}\begin{split}\label{eq:ineqss}
    \lambda_{\text{max}}\left(\sigma_{star(N)}\right) &=\frac{N-1}{4} + \lambda_{\text{max}}\left(-\mathbf{H}^{(N)}_{star}\right)\\
    &= \frac{N-1}{4} - \lambda_{\text{min}}\left(\mathbf{H}^{(N)}_{star}\right),
\end{split}\end{align}
where we have rewritten the equation in terms of the minimum eigenvalue for \(\mathbf{H}^{(N)}_{star}\), which corresponds to the ground-state energy of the Heisenberg-star spin system, calculated analytically in \cite{RICHTER19951497} to be 
\begin{equation}\label{eq:lambdamax}
    \lambda_{\text{min}}\left(\mathbf{H}^{(N)}_{star}\right) = -\frac{1+N}{4}.
\end{equation}
From Equations~\eqref{eq:ineqss} and~\eqref{eq:lambdamax}, we obtain that
\begin{equation}
\lambda_{\text{max}}\left(\sigma_{star(N)}\right)= \frac{N}{2}.
\end{equation}
Finally, taking Equation~\eqref{recursion} and using it recursively (until there is only one Bell pair left), we achieve the desired result
\begin{align}\begin{split}
     \lambda_{\textup{max}}\left(\sigma_{N}^{\langle x_0^{(k)} x_1^{(\ell)}\rangle_i}\right) 
     &\leq  \sum_{n=2}^{N} \lambda_{\text{max}}\left(\sigma_{star(n)}\right)\\
     &\leq \sum_{n=2}^{N} \frac{n}{2} \\
     &\leq  \frac{N^2}{4} +  \frac{N}{4}-\frac{1}{2}.
\end{split}\end{align}
\end{proof}

\subsection{Non-Interactive OT Protocol}\label{subsec:NIOT}

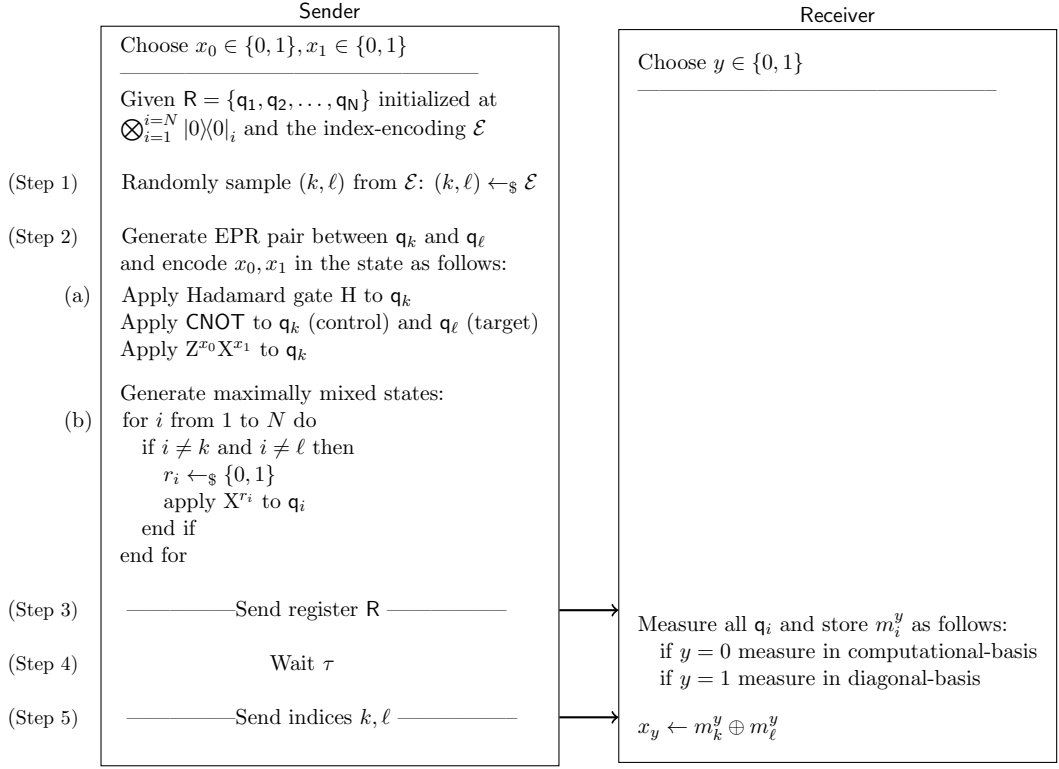
\begin{figure}[htb]
    \centering
    \centering
\begin{tikzpicture}[scale=0.75, transform shape, node distance=9cm, auto]
    \node[draw, rectangle, minimum width=5cm, minimum height=13cm, label=above:\textsf{Sender}] (sender) {
        \begin{tabular}{l}
            Choose $x_0 \in \{0,1\}, x_1 \in \{0,1\}$ \\ 
           --------------------------------------------------\\

            Given $\mathsf{R = \{q_1, q_2, \ldots, q_N\}}$ initialized at \\ \(\bigotimes_{i=1}^{i=N}\ketbra{0}{0}_i\) and the index-encoding \(\mathcal{E}\) \\\\

           {\hspace{-2cm}(\small Step 1)} \hspace{.5cm} Randomly sample \((k,\ell)\) from \(\mathcal{E}\): \((k,\ell)\leftarrow_{\$} \mathcal{E}\)\\ \\
         
           {\hspace{-2cm}(\small Step 2)} \hspace{.5cm} Generate EPR pair between \(\mathsf{q}_k\) and \(\mathsf{q}_\ell\)  \\and encode \(x_0,x_1\) in the state as follows: \vspace{0.1cm} \\ 
           
           {\hspace{-1cm}(a)}  \hspace{.4cm}Apply Hadamard gate \(\mathrm{H}\) to \(\mathsf{q}_k\) \\ 
           Apply \(\mathsf{CNOT}\) to \(\mathsf{q}_k\) (control) and \(\mathsf{q}_\ell\) (target)\\  
           Apply \(\mathrm{Z}^{x_0}\mathrm{X}^{x_1} \) to \(\mathsf{q}_k\) \vspace{0.3cm} \\  Generate maximally mixed states:\\
           {\hspace{-1cm}(b)}  \hspace{.4cm}for $i$ from $1$ to $N$ do  \\ \quad if $i \neq k$ and $i \neq \ell$ then \\
           \quad\quad \(r_i\leftarrow_{\$} \{0,1\}\) 
           \\\quad\quad apply $\mathrm{X}^{r_i}  $ to \(\mathsf{q}_i\) \\ \quad end if \\  end for\\ \\
         {\hspace{-2cm}(\small Step 3)} \hspace{.6cm}   ---------------Send register \(\mathsf{R}\)  -----------------\\ \\
          {\hspace{-2cm}(\small Step 4)} \hspace{.5cm}   \hspace{2.5cm}  Wait $\tau$ \\   \\ 
           {\hspace{-2cm}(\small Step 5)} \hspace{.6cm}  ---------------Send indices \(k,\ell\)  -----------------\\ \\
        \end{tabular}
    };
    
    \node[draw, rectangle, minimum width=5cm, minimum height=13cm, label=above:\textsf{Receiver}, right of=sender] (receiver) {
        \begin{tabular}{l}
           Choose $y\in \{0,1\}$ \\ \vspace{9cm}  
           -------------------------------------------------- \\ 

        Measure all $\mathsf{q}_i$ and store  \(m^{y}_i\) as follows:\\
        \quad  if $y = 0$ measure in computational-basis \\
        \quad if $y = 1$ measure in diagonal-basis\\[1em]
        $x_y\leftarrow m^{y}_k \oplus m^{y}_\ell$
        \end{tabular}
    };

    \draw[thick,->] ([xshift=0cm, yshift=-3.8cm]sender.east) -- ([yshift=-3.8cm]receiver.west);
    \draw[thick,->] ([xshift=0cm,yshift=-5.7cm]sender.east) -- ([yshift=-5.7cm]receiver.west);

\end{tikzpicture}
    \caption{Schematic representation of the proposed two-message non-interactive \gls{OT} protocol parameterized by \(N(\sigma), \tau\). The ``\textsf{Wait} \(\tau\)'' procedure by the \sender{} may be disregarded in exchange for a larger \(N\) (Section~\ref{subsec:noDt}).}
    \label{fig:protocol}
\end{figure}

    Intuitively, to implement the \gls{OT} protocol, the \sender{} will hide an \gls{EPR}-pair encoding its two bits \(x_0,x_1\), masked among many ``decoy'' qubits of the \(N\)-qubit register  \(\mathsf{R} = \{\mathsf{q}_1, \dots, \mathsf{q}_N\}\), such that the \receiver{} cannot know which qubits are encoding the information without the \sender{} revealing them. A detailed operational description of the protocol is given in Figure \ref{fig:protocol}. Furthermore, an informational perspective from the view of the \sender{} and \receiver{} is introduced below.
    \begin{itemize}[itemsep=2pt]
        \item \textbf{Step 0}: The \sender{} chooses \(x_0 \in \{0,1\}\), \(x_1 \in \{0,1\}\) and sets up an  \(N\)-sized qubit register  \(\mathsf{R} = \{\mathsf{q}_1, \dots, \mathsf{q}_N\}\), where \(N\) depends on the security parameter \(\sigma\),  initialized in the state \(\bigotimes_{i=1}^{i=N}\ketbra{0}{0}_i\). The \receiver{} chooses \(y \in \{0,1\}\).
        \item \textbf{Step 1}: 
        The \sender{} uniformly samples indices \(k,\ell\) from the index encoding set \(\mathcal{E}\) (with \(k<\ell\), without loss of generality), selecting qubits \(\{\mathsf{q}_k,\mathsf{q}_\ell\} \subset \mathsf{R}\). 
        \item \textbf{Step 2}: 
        \begin{itemize}
            \item[(a)] The \sender{} maximally  entangles qubits \(\{\mathsf{q}_k,\mathsf{q}_\ell\}\), \( \ket{B_{00}}_{k,\ell} = \left(\mathsf{CNOT})_{k,\ell} \cdot (H_k \otimes \mathds{1}_\ell \right) \ket{00}_{k,\ell}\).
            Furthermore, it encodes \(x_0,x_1\) in the entangled pair of  qubits \(\mathsf{q}_k,\mathsf{q}_\ell\) accordingly, \( \ket{B_{x_0 x_1}}_{k,\ell} = \left((\mathrm{Z}^{x_0} \mathrm{X}^{x_1})_k \otimes \mathds{1}_\ell \right) \ket{B_{00}}_{k,\ell},\) leading to the state 
            \begin{equation*}
                \ketbra{B_{ x_0x_1}}_{k,\ell}\otimes \ketbra{0}_{[N]\backslash {\{k,\ell\}}}.
            \end{equation*}
            \item[(b)] The \sender{} generates maximally-mixed states for the remainder of the register, by implementing 
            \(\mathrm{X}^{r_i}\) for random bit \(r_i\) to \(\ketbra{0}_{i}\) for \(i \in [N]\backslash{\{k,\ell\}}\)\begin{equation*}\ketbra{B_{ x_0x_1}}_{k,\ell}\otimes \frac{1}{2^{N-2}}\mathds{1}_{[N]\backslash {\{k,\ell\}}}.\end{equation*}
        \end{itemize}
        
        \item \textbf{Step 3}: The \sender{} sends the entire register \(\mathsf{R}\) to the \receiver{}. For each of the four possible \(x_0,x_1\) choices there is a corresponding state 
        \begin{equation}
            \rho_{\langle x_0,x_1 \rangle} 
            = \frac{1}{|\mathcal{E}|} \frac{1}{2^{N-2}}\sum_{ k < \ell }  \ketbra{B_{ x_0x_1}}_{k,\ell}\otimes \mathds{1}_{[N]\backslash {\{k,\ell\}}}.
        \end{equation} 
        Notice that the previous states correspond to the message encoding state  (Definition \ref{def:mes}) for each of the four constant message-encoding vectors. Indeed, the \sender{} will choose the message independently of the particular index-encoded sampled.
        
        \item \textbf{Step 4}: The \sender{} waits for a pre-determined time \(\tau\), specified by the \gls{NQSM}, for the memory to completely decohere.
        The \receiver{} measures each individual qubit \(\mathsf{q}_i\), either in  computational basis if \(y=0\) or in the diagonal basis \(y=1\),  and stores all classical measurement results \(m^y_i\in \{0,1\}\). 
        
        \item \textbf{Step 5}: Finally, the \sender{} sends the encoding indices \(k,\ell\), to the \receiver{}. The \receiver{}  computes the parity of the stored measurement outputs for \(\{\mathsf{q}_k, \mathsf{q}_\ell\}\), that is, \( m^y_k\oplus m^y_\ell = x_y\).
    \end{itemize}

    In this protocol, the honest \receiver{} will measure individually each qubit in the register, for which no quantum memory is needed.
    As such, a necessary aspect for the security is that the \receiver{} be forced to measure the qubits separately, otherwise, a straightforward attack is to perform \gls{SDC}~\cite{PRL:BW92} and recover both the inputs of the \sender{}.
    One way to mitigate this, as we did, is by imposing the constraints offered by the \gls{NQSM}, wherein the \sender{} will need to wait a fixed amount of time (\(\tau\)) in order for the memory of any malicious \receiver{} to decohere.  Therefore, either the \receiver{} proceeds honestly and according to the protocol prescription measures every qubit separately,  or it acts maliciously and tries to implement a general measurement over the register before losing the encoded message to decoherence. The size of the register \(N\) must be set to ensure unconditional security, which defines the success of a malicious actor in a statistical experiment of running the protocol.
    We will show that the success probability of any possible attack (i.e., the statistical distance between distributions) goes to zero linearly with \(N\).
    Thus, we will set \(N\) to be a large enough constant (independent of the adversarial power), such that the statistical distance between distributions is negligible.
        
    Regarding the waiting time \(\tau\), we remark that one instance of waiting \(\tau\) can be ``reused'' for many parallel executions of the protocol. And since \gls{OT} is often used as a building block for other primitives, and these often require many \gls{OT} executions, this delay can be amortized over all the parallel processes. Nevertheless, in some scenarios it could be perceived as undesirable the need to have an explicit time delay embedded in the design of the protocol, specially when such a delay is substantial when comparing with the generating and transmission of the required messages (qubits and indices) that can be as fast as the speed of light.
    As an alternative, one can remove the delay without affecting the unconditional security, by changing the \gls{NQSM} with the \gls{BQSM}.  We analyze this approach in more detail in Section~\ref{subsec:noDt}, where the \sender{} does not wait any time but the number of qubits that it sends (\(N\)) before revealing the indices \(k,\ell\) is chosen to be large enough, such that the \receiver{} cannot store all of them (from the \gls{BQSM} assumption). Thus, it must guess which subset of \(M\) qubits to store.

\subsubsection{Correctness and Security}\label{subsec:security}
    To establish that the protocol of Figure \ref{fig:protocol} implements a secure 1-out-of-2 \gls{OT}, it must be proved, according to the requirements of Definition~\ref{def:ot}, that: the honest execution of the protocol is correct; the \sender{} does not acquire any information regarding the input of the \receiver{}; and, the \receiver{} remains oblivious to the input of the \sender{} that was not retrieved.
    
    To accomplish such requirements, start by noticing that all communication in the protocol flows from the \sender{} to the \receiver{}, i.e., it is a non-interactive protocol.
    So first, the \sender{} must not be able to keep any (arbitrary-dimension) entangled system with the system it sends to the \receiver{} that would allow the \sender{} to somehow gather any information about the input of the \receiver{} later.
    And second, the \receiver{} must not be able to design any arbitrary-dimension \gls{POVM} over the \(N\)-dimensional state it received from the \sender{} that would allow the \receiver{} to extract more information than one of the messages of the \sender{}.
    These two properties will be formally proved below, but intuitively, they follow from the inability to extract information from future events for the first case, and from combining the \gls{NQSM} (by introducing long-enough delay that imposes decoherence of memories) with the hiding of the qubits encoding for the second case.
    
    The \gls{OT} protocol is parameterized by the statistical security parameter \(\sigma\), and by the time \(\tau\) to quantum decoherence of memories (up to an exponentially low probability \(2^{-\sigma}\)) predefined by the \gls{NQSM} where the protocol is resolved.

\paragraph{Remark.} 
    We remark that the number of transmitted qubits, $N$, must be set as $N=2^\sigma$, meaning that the communication depends on the statistical security parameter, for the statistical distance to be at least $2^{-\sigma}$.
    However, the circuit to prepare each of the states is constant-size and no memory is required.
    Moreover, as this is a \textit{statistical security parameter} that enforces the indistinguishability between two distributions in a single experiment, it is fixed and does not scale with the power of an adversary (that may even be all-powerful).
    This contrasts with \textit{computational security}, where the advantage of the adversary must go to zero faster than any polynomial, because an adversary is allowed polynomial-many tries to distinguish two distributions.
    Indeed, our parameter $N$ is a \textit{constant} in the protocol parametrization for any desired statistical distance, being independent of any computational security parameter, i.e., does not grow with the adversarial power.
        
\begin{theorem}\label{thm:security1}
    The protocol from Figure~\ref{fig:protocol} implements a 1-out-of-2 \acrlong{OT} protocol secure against computationally unbounded adversaries (unconditional security parameterized by \(\sigma\)) in the \acrlong{NQSM} with time to total decoherence \(\tau\).
\end{theorem}
\begin{proof}
    The protocol from Figure~\ref{fig:protocol} is a two-party protocol, where the \sender{} has two inputs \((x_0,x_1)\) and the \receiver{} has one input \(y\) and outputs \(x_y\), which performs precisely the functionality of \gls{OT} (Definition~\ref{def:ot}).
    We will now show the three necessary properties of correctness, \receiver{}-security and \sender{}-security.
    
\paragraph{Correctness:}
   The correctness of the honest strategy for the protocol can be immediately established since it will correspond to a ``stochastic dense coding'' \cite{PRL:PPZCT22} applied to qubits \(\{\mathsf{q}_k, \mathsf{q}_\ell\}\). Therein, both bits are encoded into the Bell state, namely, \(x_0\) is encoded in the phase, and  \(x_1\)  in the parity of the Bell state (just as in \gls{SDC}), but only one bit may be deterministically extracted when using separable measurements. Accordingly, the \receiver{} can either extract the first or second bit by measuring, respectively, the phase  or the parity observables.
   That is, measuring in the computational or the diagonal basis individually for all qubits of the register \(\mathsf{R}\), and deterministically extract the desired bit out of the Bell state shared between \(\{\mathsf{q}_k, \mathsf{q}_\ell\}\) by computing the parity of the individual measurement outputs after receiving the indices. This shows the protocol to have perfect correctness, since an honest strategy will deterministically return \(x_y\).
   We further remark that no quantum memory is required to correctly execute the protocol, and thus, no analysis of the \gls{NQSM} is required.
    
\paragraph{\receiver{}-security:}
    To prove that the protocol is secure for an honest \receiver{}, i.e., against a malicious \sender{}, it must be guaranteed that no matter what the \sender{} does, it cannot recover the input of the \receiver{} (\(y\)).
    In this case, it must be noted that the \receiver{} exclusively performs measurements on its part of the system, and does not explicitly communicate anything to the \sender{}, i.e, communication is one-way.
    Thus, any correlated event that the \sender{} can exhibit (\(Z\)) must be constrained by the \textit{no-signalling from the future}~\cite{PRA:CDP09} (also called \textit{no-backward-in-time signaling}~\cite{Quantum:GSSSBP19}), i.e.,
    \begin{equation}
        \Pr\left[Z|X=x_0 x_1,Y=y\right] = \Pr[Z|X=x_0 x_1].
    \end{equation}
    This, in turn, implies that any correlation that the \sender{} holds (\(Z\)) in its state (\(\rho_{y,x_0,x_1;\tilde{\mathsf{S}}}\)) is conditionally independent of the input of the \receiver{} (\(y\)), \(\rho_{y,x_0,x_1;\tilde{\mathsf{S}}} = \rho_y \otimes \rho_{x_0,x_1;\tilde{\mathsf{S}}}\), as required by Definition~\ref{def:ot}. 
    So,
    \begin{equation}                
        \left\|\rho_{y,x_0,x_1;\tilde{\mathsf{S}}} - \rho_y \otimes \rho_{x_0,x_1;\tilde{\mathsf{S}}}\right\|_1 =0.
    \end{equation}
    Therefore, the \sender{} cannot obtain any information about the input of the \receiver{}, meaning that the protocol has perfect security, in this case.
    
\paragraph{\sender{}-security:}
    To prove that the protocol is secure for an honest \sender{}, i.e., against a malicious \receiver{}, it must be unfeasible for the \receiver{} to recover more than one of the messages of the \sender{}.
    For this, the proof will require enforcing the \receiver{} to measure before receiving the encoding, and then using the formalism of post-measurement information (Section~\ref{subsec:pi}) to analyze the implications (or lack thereof) of sending the encoding.
    
    \medskip
    
    \noindent
    Recall that for each message encoding vector \(\textbf{m}_i\) (Definition~\ref{def:encoding}) there is a corresponding message encoding state \(\rho_{\textbf{m}_i}\) (Definition~\ref{def:mes}), 
    \begin{equation}
        \rho_{\textbf{m}_i} = \frac{1}{ |\mathcal{E}| \cdot 2^{N-2}}\left(\sum_{k<\ell}\ketbra{B_{\langle x_0^{(k)} x_1^{(\ell)}\rangle_i}}_{k,\ell}\otimes \mathds{1}_{[N]\backslash {\{k,\ell\}}}\right).
    \end{equation}
    Now, we consider the post-measurement information formalism introduced in Section~\ref{subsec:pi}, and from Lemma \ref{lemma:pi}  we have that if  $p_{x,e} = p_e/{|X|}$, then
    \begin{equation}
        \Pr^{PI}_{\text{guess}}(x|\mathcal{R}) \leq \frac{1}{|X|} \Tr\left[
      \left(  \sum_{\textbf{m}_i\in X^{\mathcal{E}} }^{} \rho_ {\textbf{m}_i}^{\alpha}\right)^{1/\alpha}
    \right], 
    \end{equation}
    for any \(\alpha > 1\). Thus, applied to our scenario where \(|X|=4\) and \(p_{x,e} = \frac{1}{4}\frac{1}{|\mathcal{E}|}\), then
    \begin{equation}\label{eq:prpiiaaa}
        \Pr^{PI}_{\text{guess}}(x|\mathcal{R}) \leq \mathcal{I}_{\alpha}(N)
    \end{equation}
    for 
    \begin{align}\begin{split}\label{eq:iaaa}
        \mathcal{I}_{\alpha} (N) &:=  \frac{1}{4} \Tr\left[
        \left(  \sum_{\textbf{m}_i\in X^{\mathcal{E}} }^{} \rho_ {\textbf{m}_i}^{\alpha}\right)^{1/\alpha}
        \right] \\
        &=  \frac{1}{ |\mathcal{E}| \cdot 2^{N}}\Tr\left[
        \left(  \sum_{\textbf{m}_i\in X^{\mathcal{E}} }^{} \sigma_{\textbf{m}_i}^{\alpha}\right)^{1/\alpha}
        \right],
    \end{split}\end{align}
    where  
    \begin{equation}\label{eq:sigma}
        \sigma_{\textbf{m}_i} = \left(\sum_{k<\ell}\ketbra{B_{\langle x_0^{(k)} x_1^{(\ell)}\rangle_i}}_{k,\ell}\otimes \mathds{1}_{[N]\backslash {\{k,\ell\}}}\right).
    \end{equation}
    Let  
    \(\Tr\left[\left(  \sum_{\textbf{m}_i\in X^{\mathcal{E}} }^{} \sigma_{\textbf{m}_i}^{\alpha}\right)^{1/\alpha} \right] = \Tr\left[ \left( \textbf{A}_{\alpha}\right)^{1/\alpha} \right]\)
    , where \( \textbf{A}_{\alpha} = \sum_{\textbf{m}_i\in X^{\mathcal{E}} }^{} \sigma_{\textbf{m}_i}^{\alpha}\).
    Since  \(\textbf{A}_{\alpha} \) is Hermitian (sum of Hermitian matrices) it can be diagonalized, thus,
    \begin{equation}\label{eq:traaa} \Tr\left[
    \left(  \textbf{A}_{\alpha}\right)^{1/\alpha}
    \right] = \sum_{i=1}^{2^N} \lambda_i( \textbf{A}_{\alpha}^{1/\alpha})  = \sum_{i=1}^{2^N} [\lambda_i( \textbf{A}_{\alpha})]^{1/\alpha} \leq 2^{N} [\lambda_{\text{max}}( \textbf{A}_{\alpha})]^{1/\alpha}. 
    \end{equation}
    Then, the maximum eigenvalue of \(\mathbf{A}_\alpha\) may be decomposed as \begin{equation}
        \lambda_{\text{max}}( \textbf{A}_{\alpha}) = \lambda_{\text{max}}\left( \sum_{\textbf{m}_i\in X^{\mathcal{E}} }^{} \sigma_{\textbf{m}_i}^{\alpha}\right).
    \end{equation}
    Using Lemma \ref{lemma:weylineq} we have
     \begin{equation}
        \lambda_{\text{max}}( \textbf{A}_{\alpha})\leq \sum_{\textbf{m}_i\in X^{\mathcal{E}} }^{}  \lambda_{\text{max}}(\sigma_{\textbf{m}_i}^{\alpha}) = \sum_{\textbf{m}_i\in X^{\mathcal{E}} }^{}  \left[\lambda_{\text{max}}(\sigma_{\textbf{m}_i})\right]^{\alpha} .
    \end{equation}
    Now, let \(\sigma_{\textbf{m}_{*}}\) be a state whose largest eigenvalue is the maximum over all \(\sigma_{\mathbf{m}_i}\), that is,  \(\lambda_{\text{max}}(\sigma_{\textbf{m}_{*}}) \geq \lambda_{\text{max}}(\sigma_{\textbf{m}_{i}})\) for any other state \(\sigma_{\textbf{m}_{i}}\). As such, 
    \begin{equation}
        \lambda_{\text{max}}( \textbf{A}_{\alpha})\leq 4^{|\mathcal{E}|} [\lambda_{\text{max}}(\sigma_{\textbf{m}_{*}}) ]^{\alpha}.
    \end{equation}
    Considering again Equation~\eqref{eq:traaa}, in turn, means that
    \begin{align}\begin{split}
        \Tr\left[
        \left(  \textbf{A}_{\alpha}\right)^{1/\alpha}
        \right] 
        &\leq 2^{N} [\lambda_{\text{max}}( \textbf{A}_{\alpha})]^{1/\alpha} \\
        &\leq 2^N \left (      4^{|\mathcal{E}|} [\lambda_{\text{max}}(\sigma_{\textbf{m}_{*}}) ]^{\alpha}              \right)^{1/\alpha} \\
        &\leq 2^N      4^{\frac{|\mathcal{E}|}{\alpha}}\lambda_{\text{max}}(\sigma_{\textbf{m}_{*}}). 
    \end{split}\end{align}
    Then, for Equation~\eqref{eq:iaaa} we get
    \begin{align}\begin{split}
        \mathcal{I}_{\alpha} (N) &=  \frac{1}{ |\mathcal{E}| \cdot 2^{N}}\Tr\left[\textbf{A}_{\alpha}^{1/\alpha} \right]\\
        &\leq \frac{1}{ |\mathcal{E}| \cdot 2^{N}}\, 2^N \,  4^{\frac{|\mathcal{E}|}{\alpha}}\,\lambda_{\text{max}}(\sigma_{\textbf{m}_{*}}).
    \end{split}\end{align}
    For \(\alpha \gg |\mathcal{E}|\), we have that \( \mathcal{I}_{\alpha \gg |\mathcal{E}|} (N)  \leq  ({\lambda_{\text{max}}(\sigma_{\textbf{m}_{*}})}/{ |\mathcal{E}| } )4^{\approx 0} \), which with Equation~\eqref{eq:prpiiaaa} yields that
    \begin{equation}
        \Pr^{PI}_{\text{guess}}(x|\mathcal{R}) \leq \frac{\lambda_{\text{max}}(\sigma_{\textbf{m}_{*}})}{|\mathcal{E}|}.
    \end{equation}
    Finally, Lemma~\ref{lemma:lambdasigstar} establishes that
    \(\lambda_{\text{max}}(\sigma_{\textbf{m}_{*}}) \leq N^2/4 + N/4-1/2\), and, by direct substitution, we have
    \begin{equation} 
        \Pr^{PI}_{\text{guess}}(x|\mathcal{R}) \leq \frac{1}{2}+\frac{1}{N}.
    \end{equation}
    Hence, setting \(N = 2^{\sigma}\) makes the \gls{OT} protocol implementation of Figure~\ref{fig:protocol} both \sender{}-secure and \receiver{}-secure, which concludes the proof.
    \end{proof}

\subsubsection{Relinquishing the \(\tau\) Constraint}\label{subsec:noDt}
    The construction from Figure~\ref{fig:protocol} requires that, at one point of the execution, the \sender{} waits for a time interval \(\tau\), such that, given the \gls{NQSM}, the \receiver{} must measure the qubits before receiving the indices \(k,\ell\).
    This constraint might be questioned, as it introduces a substantial delay in the system, specially comparing with the generating and transmission of the required messages (qubits and indices) that can be as fast as the speed of light.
    If the trade-off between the waited time \(\tau\) and the time required to generate and send qubits favors the latter, then this waiting can be removed without affecting the unconditional security, but relaxing the \gls{NQSM} to the \gls{BQSM} instead.
    Indeed, by considering that the \gls{BQSM} forces a limitation on the amount of qubits stored (maximum size of the memory), estimated given some specific limitation of the technology, \(\tau\) can be set to zero.
    Still, a malicious \receiver{} would not be able to cheat and recover more than one of the inputs of the \sender{}, even by measuring its stored system after receiving the indices \(k,\ell\) from the \sender{}.
    
    Note that setting \(\tau=0\) means that the indices (\(k,\ell\)) are sent immediately after the register \(\mathsf{R}\).
    This effectively merges the two messages into an arbitrarily small time period, approaching what could be considered a one-shot protocol.
    However, we  still consider this a two-message procedure, as the messages cannot happen simultaneously (i.e., cannot be permuted), and are inherently sequential with a fixed order (first qubits, then indices), as in the phases of Definition~\ref{def:bqsm}.

    \begin{theorem}\label{thm:security2}
        The protocol from Figure~\ref{fig:protocol} implements a 1-out-of-2 \acrlong{OT} protocol secure against computationally unbounded adversaries (unconditional security parameterized by \(\sigma\)) in the \acrlong{BQSM} with time bound \(t=\tau=0\) and memory bound \(M\).
    \end{theorem}
    \begin{proof}
        As in Theorem~\ref{thm:security1}, start by perceiving that the protocol from Figure~\ref{fig:protocol} implements a 1-out-of-2 \gls{OT}.
        Then, note that the \receiver{}-security (against a malicious \sender{}) does not rely on the \gls{BQSM}, and so this does not alter this part of the security proof.
        Thus, all that requires proving is the \sender{}-security of the protocol, i.e., against a malicious \receiver{}.
    
        In this modified setting, besides the general measurements described in the proof of Theorem~\ref{thm:security1}, there is an added possibility that the \receiver{} performs joint measurements on the system, by storing some of its qubits until after knowing \(k,\ell\).
        From the \gls{BQSM} (Definition~\ref{def:bqsm}), let \(M\) be a parameter representing the maximum size of the memory of a party in the transient phase \(\mathcal{M}_{\tau,M}\).
        Note that, from the Shannon's source coding theorem~\cite{Shannon48}, no unitary can be applied that compresses the \(N\) transmitted qubits into a smaller number, since these are independent and uniformly random prepared states.
        Then, let \(Z\) be the event of sampling \(M\) indices from \(\{1,\dots,N\}\) without replacement (the qubits stored in memory by the \receiver{}), for a security parameter \(\sigma\), set \(N>M\) such that
        \begin{equation}
            \Pr[k,\ell \in Z] = 2\,\frac{M}{N}\,\frac{M-1}{N-1} < 2^{-\sigma}.
        \end{equation}
        Therefore, as long as the phase \(\mathcal{M}_{\tau,M}\) of the \gls{BQSM} happens to the memory of the \receiver{} between receiving register \(\mathsf{R}\) and the indices \(k,\ell\), the receiver can only get one of the inputs of the \sender{}, up to an exponentially low probability \(2^{-\sigma}\), for a large enough \(N\), assuring the security of the \gls{OT}.
    \end{proof}

\section{One-Time Memory and One-Time Programs}\label{sec:CompProtocol}
In this section, the two-message unconditionally-secure \gls{OT} protocol from Section~\ref{sec:ITProtocol} is expanded upon to make it a one-shot \gls{OT}, achieving a \gls{OTM}.
This is accomplished by relaxing the security of the \gls{OT} protocol to rely on computational assumptions (namely, \glspl{TLP} built from \glspl{OWF} and \glspl{SF}), thus enforcing restrictions on the computing capabilities of adversarial parties, and by still working in the \gls{NQSM}.
Still, even though a computational assumption is introduced, the protocol is everlasting secure, as the non-chosen message cannot be retrieved after the execution of the protocol.
Then, by using the compiler from~\cite{C:GKR08}, we achieve \glspl{OTP} from this \gls{OTM}, under the assumptions of existence of a \gls{OWF} and a \gls{SF}, and the \gls{NQSM}.

We start by introducing the concept of \gls{TLP}, a cryptographic primitive whose security relies on computational hardness assumptions (Section~\ref{sec:tlp}).
Then, we leverage this primitive together with the previous construction of Section~\ref{subsec:NIOT} to achieve the desired one-shot \gls{OT} protocol, i.e., the \gls{OTM} (Section~\ref{sec:1sot}).
Lastly, as a corollary, we use the compiler from~\cite{C:GKR08} to get \glspl{OTP} from the \gls{OTM} construction (Section~\ref{subsec:otp}).

\subsection{Time-Lock Puzzles}\label{sec:tlp}
A \gls{TLP}~\cite{RSW96} is a non-interactive cryptographic primitive that allows for a party to send a hidden message, such that this message can only be read after some time has elapsed.
It is required that a puzzle can be efficiently generated, i.e., the time to generate the puzzle must be much less than the time to solve it; and that the secret can only be read after some pre-defined time, even for parallel algorithms.
Definitions~\ref{def:puzzle}~and~\ref{def:tlp} formally state this idea.
The minimal assumptions required to realize a \gls{TLP} have been studied in~\cite{IC:JMRR21}.

\glspl{TLP} have a wide variety of applications, but in this work they will be integrated in the \gls{NQSM} to introduce a delay in the protocol, such that the quantum memory of a party will decohere before it is able to access the information hidden by the \gls{TLP}.

\begin{definition}[Puzzle~\cite{ITCS:BGJ+27}]\label{def:puzzle}
    Let \(\lambda\in\mathbb{N}\) be the security parameter. A \textit{puzzle} is a pair of algorithms \((\mathsf{Puzzle.Gen}\), \(\mathsf{Puzzle.Sol})\) with
    \begin{itemize}
        \item \(Z \leftarrow \mathsf{Puzzle.Gen}(\tau, s)\) takes as input a time parameter \(\tau\) and a solution \(s \in \{0, 1\}^\lambda\), and outputs a puzzle \(Z\). \(\mathsf{Puzzle.Gen}(\tau, s)\) takes \(\operatorname{poly}(\log \tau, \lambda)\) time.
        \item \(s \leftarrow \mathsf{Puzzle.Sol}(Z)\) takes as input a puzzle \(Z\) and outputs a solution \(s\). \(\mathsf{Puzzle.Sol}(Z)\) takes \(\tau\cdot poly(\lambda)\) time.
    \end{itemize}
    Then, for all \(\lambda\), time parameter \(\tau\), solution \(s \in \{0, 1\}^\lambda\), and puzzle \(Z\) in the support of \(\mathsf{Puzzle.Gen}(\tau, s)\), \(\mathsf{Puzzle.Sol}(Z)\) outputs \(s\).
\end{definition}

\begin{definition}[Time-Lock Puzzle~\cite{ITCS:BGJ+27}]\label{def:tlp}
    A puzzle \((\mathsf{Puzzle.Gen}, \mathsf{Puzzle.Sol})\) is a \textit{time-lock puzzle} with gap \(\varepsilon < 1\) if there exists a polynomial \(\underline{\tau}(\cdot)\), such that for every polynomial \(\tau(\cdot) \geq \underline{\tau}(\cdot)\) and adversary \(\mathcal{A} = \left\{\mathcal{A}_\lambda \right\}_{\lambda \in \mathbb{N}}\) of depth smaller than \(\tau^\varepsilon(\lambda)\), there exists a negligible function \(\mu\), such that for all \(\lambda \in \mathbb{N}\) and \(s_0, s_1 \in \{0, 1\}^\lambda\):
    \[\Pr\left[b\leftarrow \mathcal{A}_\lambda(Z):
    \begin{array}{l}
    b\leftarrow \{0,1\}\\
    Z \leftarrow \mathsf{Puzzle.Gen}(\tau(\lambda), s_b)
    \end{array}
    \right] \leq \frac{1}{2} + \mu(\lambda). \]
\end{definition}

In this work, minimal requirements for the \glspl{TLP} are needed.
In particular, it is enough to consider \textit{weak \acrlongpl{TLP}}~\cite{ITCS:BGJ+27} that can be build directly from \glspl{OWF} (assuming the existence of a non-parallelizing language\footnote{A non-parallelizing language is equivalent to a sequential function~\cite{IC:JMRR21}.}).
This relaxed formulation of \glspl{TLP} only requires that the puzzle can be generated in fast parallel time (circuit computing \textsf{Puzzle.Gen} of size \(\operatorname{poly}(\tau, \lambda)\) has depth \(\operatorname{poly}(\log \tau,\lambda)\)), while it still takes time \(\tau\) to solve (\textsf{Puzzle.Sol} takes time \(\tau\cdot \operatorname{poly}(\lambda)\)).
\begin{lemma}[\cite{ITCS:BGJ+27,IC:JMRR21}]\label{lemma:wtlp}
    There exists a weak \acrlong{TLP}, assuming the existence of a \acrlong{OWF} and a \acrlong{SF}, which fulfills the security definition of Definition~\ref{def:tlp}. 
\end{lemma}

In addition, for our purpose, since the time intervals that are considered in the \gls{NQSM} are often short enough (e.g., \(0.25\si{ms}\)~\cite{NPJ:VAV+22}), the requirements on the puzzle generation can even be further relaxed, such that the time to generate the puzzle may be the same as the time to solve it.
This enables very simple and diverse constructions, such as repeated hashing of a shared seed.
Nevertheless, to be as general as possible and limit the setup assumptions to the \gls{NQSM}, without imposing conditions on its parameters (time to quantum decoherence), weak \glspl{TLP} are considered from here onwards.

\subsection{One-Time Memory (or, one-shot Oblivious Transfer)}\label{sec:1sot}
Here, a construction for a one-shot chosen-bit 1-out-of-2 \gls{OT} is given, yielding the desired \gls{OTM}.
First, in Section~\ref{sec:ITProtocol}, a two-message non-interactive unconditionally-secure 1-out-of-2 \gls{OT} in the \gls{NQSM} was described.
Now, by using a \gls{OWF} and a \gls{SF} via a \gls{TLP} in the protocol, relaxing the security requirements to hold on computationally-hard problems, a one-shot 1-out-of-2 \gls{OT}, i.e., a \gls{OTM}, is constructed.

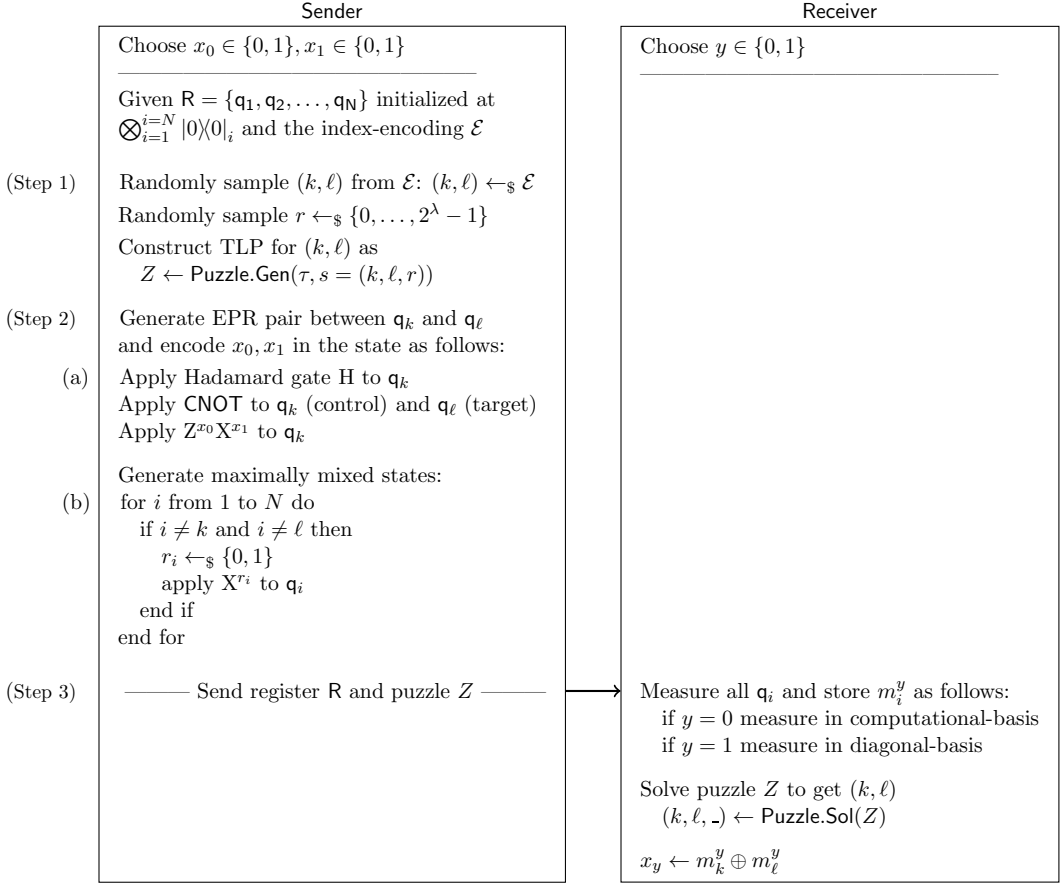
\begin{figure}[htb]
    \centering
    \centering
\begin{tikzpicture}[scale=0.75, transform shape, node distance=9cm, auto]
    \node[draw, rectangle, minimum width=5cm, minimum height=15cm, label=above:\textsf{Sender}] (sender) {
        \begin{tabular}{l}
            Choose $x_0 \in \{0,1\}, x_1 \in \{0,1\}$ \\ 
           --------------------------------------------------\\

            Given $\mathsf{R = \{q_1, q_2, \ldots, q_N\}}$ initialized at \\ \(\bigotimes_{i=1}^{i=N}\ketbra{0}{0}_i\) and the index-encoding \(\mathcal{E}\) \\\\

           {\hspace{-2cm}(\small Step 1)} \hspace{.5cm} Randomly sample \((k,\ell)\) from \(\mathcal{E}\): \((k,\ell)\leftarrow_{\$} \mathcal{E}\)\\[.1cm]
           Randomly sample \(r\leftarrow_\$ \{0,\dots,2^\lambda -1\}\)\\[.1cm]
           Construct TLP for \((k,\ell)\) as \\
           \quad\(Z\leftarrow \mathsf{Puzzle.Gen}(\tau, s=(k,\ell,r))\)\\[.3cm]
         
           {\hspace{-2cm}(\small Step 2)} \hspace{.5cm} Generate EPR pair between \(\mathsf{q}_k\) and \(\mathsf{q}_\ell\)  \\and encode \(x_0,x_1\) in the state as follows: \vspace{0.1cm} \\ 
           
           {\hspace{-1cm}(a)}  \hspace{.4cm}Apply Hadamard gate \(\mathrm{H}\) to \(\mathsf{q}_k\) \\ 
           Apply \(\mathsf{CNOT}\) to \(\mathsf{q}_k\) (control) and \(\mathsf{q}_\ell\) (target)\\  
           Apply \(\mathrm{Z}^{x_0}\mathrm{X}^{x_1} \) to \(\mathsf{q}_k\) \vspace{0.3cm} \\  Generate maximally mixed states:\\
           {\hspace{-1cm}(b)}  \hspace{.4cm}for $i$ from $1$ to $N$ do  \\ \quad if $i \neq k$ and $i \neq \ell$ then \\
           \quad\quad \(r_i\leftarrow_{\$} \{0,1\}\) 
           \\\quad\quad apply $\mathrm{X}^{r_i}  $ to \(\mathsf{q}_i\) \\ \quad end if \\  end for\\ \\
         {\hspace{-2cm}(\small Step 3)} \hspace{.6cm}   --------- Send register \(\mathsf{R}\) and puzzle \(Z\)  --------- \\[3cm]
        \end{tabular}
    };
    
    \node[draw, rectangle, minimum width=5cm, minimum height=15cm, label=above:\textsf{Receiver}, right of=sender] (receiver) {
        \begin{tabular}{l}
           Choose $y\in \{0,1\}$ \\ \vspace{10.5cm} 
           -------------------------------------------------- \\

        Measure all $\mathsf{q}_i$ and store  \(m^{y}_i\) as follows:\\
          \quad  if $y = 0$ measure in computational-basis \\
           \quad if $y = 1$ measure in diagonal-basis  \\[.3cm]
        Solve puzzle \(Z\) to get \((k,\ell)\)\\
        \quad \((k,\ell, \_) \leftarrow \mathsf{Puzzle.Sol}(Z)\)\\[.3cm]
        $x_y\leftarrow m^{y}_k \oplus m^{y}_\ell$
        \end{tabular}
    };

    \draw[thick,->] ([xshift=0cm,yshift=-4.2cm]sender.east) -- ([yshift=-4.2cm]receiver.west);
\end{tikzpicture} 
    \caption{Schematic representation of the proposed one-shot \gls{OT} protocol parameterized by \(N(\sigma), \lambda, \tau\).}
    \label{fig:protocol1r}
\end{figure}

The operational description of the \gls{OTM} is given in Figure~\ref{fig:protocol1r}, and executes analogously to the protocol from Section~\ref{sec:ITProtocol}.
Below, we detail the differences in the various steps when compared to the previous one. Step 0 and Step 2, which are not explicitly mentioned, are identical to Figure \ref{fig:protocol}.

    \begin{itemize}
        \item \textbf{Step 1}: 
        The \sender{} uniformly samples indices \(k,\ell\) from the index encoding set \(\mathcal{E}\) (with \(k<\ell\), without loss of generality), selecting qubits \(\{\mathsf{q}_k,\mathsf{q}_\ell\} \subset \mathsf{R}\). The \sender{} hides the  \(k,\ell\), as the solution of the \gls{TLP} (\(Z\)), parametrized by \(\tau\) whose lower bound is established by the \gls{NQSM}.
        \item \textbf{Step 3}: The \sender{} sends the entire register \(\mathsf{R}\) and the \gls{TLP} (\(Z\)) to the \receiver{}. The \receiver{} measures each individual qubit \(\mathsf{q}_i\), either in  computational basis if \(y=0\) or in the diagonal basis \(y=1\),  and stores all classical measurement results \(m^y_i\in \{0,1\}\). Concurrently, the \receiver{} solves the \gls{TLP} (\(Z\)), which will reveal the indices \(k,\ell\) as the solution. Finally, once the puzzle is solved, the \receiver{} computes the parity of the stored measurement outputs for \(\{\mathsf{q}_k, \mathsf{q}_\ell\}\), that is, \( m^y_k\oplus m^y_\ell = x_y\).
    \end{itemize}

The protocol still works in the \gls{NQSM}, but instead of relying on an explicit time-delay introduced by the \sender{} in the execution of the protocol, it relies on a \gls{TLP} to enforce it.
This has several advantages (besides proving that such a construction is possible), as it delegates the responsibility of time-keeping from the sender to a cryptographic primitive.
But, perhaps as important, it allows for a single \gls{TLP} to hide the secret information of many \glspl{OT}/\glspl{OTM}, effectively amortizing the time lag and computation required to perform many executions that are performed in parallel, greatly boosting performance.

For the protocol to be secure, the \gls{TLP} is designed such that it explores the quantum decoherence of imperfect quantum memories, here embodied by the \gls{NQSM}.
Setting the time it takes to solve the \gls{TLP} (\(\tau\)) such that it  is larger than the decoherence time modeled by the \gls{NQSM}, again, enforces the \receiver{} to measures the two entangled qubits without knowing the encoding, as required to achieve security.

\subsubsection{Security}
Again, to guarantee security, it must be proved that the \sender{} cannot obtain any information regarding the input of the \receiver{}; and, that the \receiver{} can recover at most one of the inputs of the \sender{}.
Since this is a one-shot protocol, security requires that: the \sender{} cannot construct a message (e.g., by keeping correlated ancillas) that allows it to extract any information on the input of the \receiver{}; and that a (single) honestly-crafted message does not reveal more than one of the inputs of the \sender{} regardless of any \gls{POVM} on the overall register that the \receiver{} can perform, and assuming the security of the underlying assumptions of the \gls{TLP}.

The protocol is parameterized by the statistical security parameter \(\sigma\), computational security parameter \(\lambda\), and the time \(\tau\) to quantum decoherence of memories established from the \gls{NQSM}.

\begin{theorem}\label{thm:security1r}
    The protocol from Figure~\ref{fig:protocol1r} implements a computationally-secure 1-out-of-2 \acrlong{OT} protocol, assuming the existence of a \acrlong{OWF} and a \acrlong{SF}, in the \acrlong{NQSM} (parameterized by \(\sigma, \lambda, \tau\)). It is a \acrlong{OTM}, since it is also one-shot.
\end{theorem}
\begin{proof}
Assuredly, the protocol of Figure~\ref{fig:protocol1r} implements a 1-out-of-2 \gls{OT} functionality.
It is also one-shot, thus if it implements a secure (chosen-input) 1-out-of-2 \gls{OT}, then it is a \gls{OTM} (Definition~\ref{def:otm}).
So, it remains to prove that it fulfills the security requirements of \gls{OT} in the \gls{NQSM}.

From Lemma~\ref{lemma:wtlp}, there exists a secure weak \gls{TLP} assuming the existence of a \gls{OWF} and a \gls{SF}, which can be generated in parallel in time \(\log \tau\) and that takes time \(\tau\) to solve.
Then, from the \gls{NQSM}, let \(\tau\) be the time that a quantum memory takes to completely decohere, up to probability \(2^{-\sigma}\).
Again, the \gls{NQSM} can be applied to the setting of this protocol as the memory of a malicious \receiver{} must linearly increase with \(N\), the number of sent qubits by the \sender{}, which exponentially decreases its memory storage capabilities, as in Equation~\eqref{eq:nqsm}.

\paragraph{\receiver{}-security:} 
Same as in Section~\ref{subsec:security}.
All the sender does is send the same (\(N\)) qubits as before, and instead of sending the indices \(k,\ell\) after, it sends a \gls{TLP} hiding \(k,\ell\) together with the qubits.
Clearly, from the security of the weak \gls{TLP}, there is nothing the \sender{} can do that allow it to gain any information on the input of the \receiver{}.

\paragraph{\sender{}-security:}
From the security of the \gls{TLP}, the puzzle does not reveal any information about the indices \(k,\ell\) before time \(\tau\), up to negligible probability in \({\lambda}\).
Assuming the \gls{NQSM}, this means that a malicious \receiver{} cannot store the \(N\) qubits more time than the one it takes to solve the puzzle, as they would completely decohere.
Then, before time \(\tau\), the view of the \receiver{} is indistinguishable (it is the same) of its view in the previous setting of Section~\ref{subsec:security} (where all the \receiver{} sees is the \(N\) qubits, before receiving \(k,\ell\)), up to a negligible probability in \(\lambda\), assuming the hardness of the weak \gls{TLP}.
And, after time \(\tau\), the view is also indistinguishable, as in both cases the \receiver{} gets total information on the indices \(k,\ell\).
Thus, all a malicious \receiver{} can do in this setting, it could also do in the secure setting of Section~\ref{subsec:NIOT}, which is proved to be secure.
Moreover, since after the execution of the protocol (after time \(\tau\)) all information in the system is lost, there is no possibility of the malicious \receiver{} recovering the other message in the future, making the \gls{OT} protocol everlasting secure.

\medskip

Therefore, by reduction, assuming the existence of a \gls{OWF} and a \gls{SF}, and working in the \gls{NQSM}, no malicious \sender{} or malicious \receiver{} can do anything more when engaging in the protocol of Figure~\ref{fig:protocol1r} than they could have done in the secure protocol established in Section~\ref{subsec:NIOT}.
The proof that the protocol of Figure~\ref{fig:protocol1r} implements a secure chosen-input 1-out-of-2 \gls{OT}, together with the property of being one-shot (consists of a single message transmitted from the \sender{} to the \receiver{}) means that this protocol is a \gls{OTM}.
\end{proof}

\subsection{One-Time Program}\label{subsec:otp}
We construct \glspl{OTP} in the \gls{NQSM}, by using the designed \gls{OTM} of Section~\ref{sec:1sot}, and thus assuming the existence of a \gls{OWF} and a \gls{SF}. 
This result comes as a direct corollary of the one-time compiler from~\cite{C:GKR08,TCC:GIS+10}, where from parallel-\gls{OTM} and \glspl{OWF}, it is possible to construct \glspl{OTP} from any standard program (Theorem~\ref{thm:otp-compiler}).
This compiler is directly applicable to the polynomial parallel executions of the \gls{OTM} achieved in Section~\ref{sec:1sot}.
As such, its size is exponential (but constant) in the statistical security parameter, but still only grows polynomially with the power of the adversary (i.e., with the computational security parameter).

Indeed, our \gls{OTM} construction, given the essence of the \gls{NQSM}, has the property that it must be evaluated right away.
Unfortunately this is unavoidable in our setting of the \gls{NQSM}, but still allows for \glspl{OTP} that are evaluated as they are received.
Critically, our construction of \gls{OTM} implies that, once received, all its qubits must be measured before the \gls{TLP} may be solved (see Theorem~\ref{thm:security1r}).
From this fact, it is straightforward to notice that our construction of \gls{OTM} immediately yields parallel-\gls{OTM}:
Suppose an adversary receives two \glspl{OTM}, implemented as in Section~\ref{sec:1sot}.
If it waits to finish solving the \gls{TLP} corresponding to the first before measuring the states corresponding to the second, these would have completely decohered by then (up to exponentially low probability).
Moreover, the \gls{OTM} protocol of Figure~\ref{fig:protocol1r} may also be trivially extended to encompass the hiding of all encoding information of the multiple \glspl{OTM} in a single \gls{TLP}, making the proof of parallel-\gls{OTM} follow by direct application of the proof of Theorem~\ref{thm:security1r}.
Note that, even though we propose \glspl{OTM} with bit outputs, these may be compiled to \gls{OTM} with string outputs, and then directly applied to obtain \gls{OTP}~\cite{TCC:GIS+10}.

We remark that the definitions introduced in~\cite{C:GKR08,TCC:GIS+10} are simulation-based definitions, while in our work we prove security by demonstrating the properties of \sender{} and \receiver{} security of \gls{OT}, together with the property of being one-shot, according to Definitions~\ref{def:ot} and~\ref{def:otm}.
Nevertheless, assuming our proposal described in Figure~\ref{fig:protocol1r} implements an \gls{OTM}, the simulation based security of the compiled \gls{OTP} is independent of the underlying \gls{OTM} implementation.

\section*{Acknowledgements}
The authors thank David Elkouss for insightful discussions.

\medskip

\noindent
RF acknowledges the support of the QuantaGenomics project funded within the QuantERA II Programme that has received funding from the European Union's Horizon 2020 research and innovation programme under Grant Agreement No 101017733, and with funding organisations, The Foundation for Science and Technology – FCT (QuantERA/0001/2021), Agence Nationale de la Recherche - ANR, and State Research Agency – AEI. This work was supported in part by the European Union under the programs Horizon Europe R\&I, through the project QSNP (GA 101114043). 
MG acknowledges FCT - Fundação para a Ciência e a Tecnologia (Portugal) financing refs. UIDB/50021/2020 and UIDP/50021/2020 (resp. DOI 10.54499/UIDB/50021/2020 and 10.54499/UIDP/50021/2020).
LN acknowledges support from FCT - Fundação para a Ciência e a Tecnologia (Portugal) via the Project No. CEECINST/00062/2018.
EZC acknowledges funding by FCT/MCTES - Fundação para a Ciência e a Tecnologia (Portugal) - through national funds and when applicable co-funding by EU funds under the project UIDB/50008/2020, and funding by FCT through project 2021.03707.CEECIND/CP1653/CT0002.

\bibliographystyle{alpha}
\bibliography{bib}

\end{document}